\documentclass[10pt,journal,compsoc]{IEEEtran}
\usepackage{ifpdf}
\usepackage{mathrsfs}
%
\ifCLASSOPTIONcompsoc
\usepackage[nocompress]{cite}
\else
\usepackage{cite}
\fi
\usepackage{amsmath}
\usepackage{amssymb}
\usepackage{bm}
\usepackage{array}
\usepackage{comment}
\usepackage{algpseudocode}
\usepackage{graphicx}
\usepackage{xcolor}
\usepackage{color}
\usepackage{booktabs,hyphenat,balance}
\usepackage{threeparttable}
\usepackage{microtype}
\usepackage{multirow}
\usepackage{arydshln}
\usepackage{url}
\usepackage{multicol}
\usepackage{wrapfig}
\usepackage{amsthm}
\usepackage{amsmath}

\newtheorem{lemma}{Lemma}
\usepackage[bookmarks,backref=true,linkcolor=black]{hyperref}
\hypersetup{
	pdfauthor = {},
	pdftitle = {},
	pdfsubject = {},
	pdfkeywords = {},
	colorlinks=true,
	linkcolor= black,
	citecolor= black,
	pageanchor=true,
	urlcolor = black,
	plainpages = false,
	linktocpage
}
\usepackage{amsthm}
\usepackage{wrapfig}
\usepackage{enumitem}
\graphicspath{{./images/}}

\hyphenation{net-works}

\newtheorem{proposition}{Proposition}
\newtheorem{assumption}{Assumption}

\def \etal {{\emph{et al}.\thinspace}}

\newcommand{\myparagraph}[1]{ \vspace{0.2\baselineskip} \noindent \textbf{#1.}~}
\newcommand{\firstparagraph}[1]{\noindent \textbf{#1.}~}
\newcommand{\loss}[1]{E_{\text{#1}}}
\newcommand{\weight}[1]{w_{\text{#1}}}
\newcommand{\lvisdat}{\loss{v,dat}}
\newcommand{\lvissmooth}{\loss{v,smo}}
\newcommand{\lviscons}{\loss{v,comp}}
\newcommand{\lvisdepth}{\loss{v,dep}}
\newcommand{\wvissmooth}{\weight{v,smo}}
\newcommand{\wviscons}{\weight{v,comp}}
\newcommand{\wvisdepth}{\weight{v,dep}}
\newcommand{\loccdat}{\loss{o,dat}}
\newcommand{\loccsmooth}{\loss{o,smo}}
\newcommand{\locccons}{\loss{o,comp}}
\newcommand{\loccdepth}{\loss{o,dep}}
\newcommand{\woccsmooth}{\weight{o,smo}}
\newcommand{\wocccons}{\weight{o,comp}}
\newcommand{\woccdepth}{\weight{o,dep}}
\newcommand{\lgc}{\loss{cons}}
\newcommand{\wgc}{\weight{cons}}
\newcommand{\lbsr}{\loss{surf}}
\newcommand{\lbsrdat}{\loss{surf,dat}}
\newcommand{\lbsrfair}{\loss{surf,fair}}
\newcommand{\wbsrfair}{\weight{surf,fair}}
\newcommand{\wsurffair}{\weight{f}}
\newcommand{\cpoint}[1]{\mathbf{p}_{#1}}
\newcommand{\cpxycoord}[1]{\overline{\mathbf{p}}_{#1}}
\newcommand{\candxycoord}[1]{\widetilde{\mathbf{p}}_{#1}}
\newcommand{\xysample}[1]{\mathbf{a}_{#1}}
\newcommand{\boundarysamples}{\mathcal{B}}
\newcommand{\interiorsamples}{\mathcal{I}}

\newcommand{\wcpxysmooth}{\lambda_3}
\newcommand{\lcpxysmooth}{\loss{fair}}
\newcommand{\wsampletocp}{\lambda_1}
\newcommand{\wcptosample}{\lambda_2}
\newcommand{\closestcpidx}[1]{\rho_{#1}}
\newcommand{\closestsampleidx}[1]{\eta_{#1}}
\newcommand{\lcdf}{\loss{cdf}}
\newcommand{\lcdfdat}{\loss{cdf,dat}}
\newcommand{\lcdfsmooth}{\loss{cdf,smo}}
\newcommand{\lcdfcons}{\loss{cdf,cons}}
\newcommand{\wcdfsmooth}{\weight{cdf,smo}}
\newcommand{\wcdfcons}{\weight{cdf,cons}}

\newcommand{\vddatweightdepth}{\sigma_{\text{d}}}
\newcommand{\vddatweightnormal}{\sigma_{\text{n}}}
\newcommand{\vdsmoweightdepth}{\tau_{\text{d}}}
\newcommand{\vdsmoweightnormal}{\tau_{\text{n}}}
\newcommand{\gcweightdepth}{\gamma_{\text{d}}}
\newcommand{\gcweightnormal}{\gamma_{\text{n}}}

\newcommand{\fourneighborset}[1]{N_4(#1)}

\newcommand{\gt}[1]{#1^{\text{gt}}}
\newcommand{\vispixset}{\mathcal{P}_{\text{v}}} 
\newcommand{\visnbset}{\mathcal{N}_{\text{v}}} 
\newcommand{\visxnbset}{\mathcal{R}_{\text{v}}} 
\newcommand{\visynbset}{\mathcal{U}_{\text{v}}} 
\newcommand{\occpixset}{\mathcal{P}_{\text{o}}} 
\newcommand{\occnbset}{\mathcal{N}_{\text{o}}} 
\newcommand{\rightneighbor}[1]{r({#1})}
\newcommand{\upneighbor}[1]{u({#1})}
\newcommand{\pixwidth}{\delta}
\newcommand{\vissamplepixset}{\mathcal{S}_{\text{v}}}
\newcommand{\spd}[1]{#1^{\text{s}}} 
\newcommand{\bsrparamset}{\mathcal{S}_{\text{bsr}}}
\newcommand{\bsrsurf}[1]{\mathbf{s}(#1)}
\newcommand{\bsrsurfgt}[1]{\mathbf{s}_{\text{gt}}(#1)}
\newcommand{\ulaplacian}[1]{\mathbf{l}_{#1}^u}
\newcommand{\vlaplacian}[1]{\mathbf{l}_{#1}^v}
\newcommand{\dpoint}[1]{\mathbf{q}_{#1}}
\newcommand{\dpointset}{\mathcal{D}}
\newcommand{\compx}[1]{{#1}'}
\newcommand{\compxenc}[1]{\hat{#1}}
\newcommand{\heightfield}[1]{f(#1)}
\newcommand{\zfair}[1]{E_f(#1)}
\newcommand{\chamferdist}{D_{\text{c}}}
\newcommand{\normaldist}{D_{\text{n}}}

\newcommand{\recsurfsample}[1]{\mathbf{x}_{#1}}
\newcommand{\gtsurfsample}[1]{\mathbf{y}_{#1}}
\newcommand{\gtsurfcpidx}[1]{\alpha_{#1}}
\newcommand{\recsurfcpidx}[1]{\beta_{#1}}
\newcommand{\recsurfsampleset}{\mathcal{S}}
\newcommand{\gtsurfsampleset}{\mathcal{S}_{\text{gt}}}
\newcommand{\flsampleset}{\mathcal{S}_{\text{f}}}
\newcommand{\flsample}[1]{\mathbf{f}_{#1}}
\newcommand{\pqedgedist}{D_{\text{f}}}
\newcommand{\anglefunc}{d_{\text{n}}}
\newcommand{\vecsymbol}{\mathbf{v}}

\newcommand{\meanplanarity}{P_{\text{mean}}}
\newcommand{\maxplanarity}{P_{\text{max}}}
\newcommand{\avgmeanplanarity}{\overline{P}_{\text{mean}}}
\newcommand{\avgmaxplanarity}{\overline{P}_{\text{max}}}
\newcommand{\avgpqedgedist}{\overline{D}_{\text{f}}}

\newcommand{\edgecp}[1]{\mathbf{g}_{\zeta_{#1}}}
\newcommand{\cdfsmoenergy}{E_{\text{smooth}}}

\newcommand{\visdepthmse}{D_{\text{v}}}
\newcommand{\occdepthmse}{D_{\text{o}}}

\newcommand{\copyrightnote}{
	{
		\begin{picture}(0,0)(0,0)
			\put(-475,-508){ 
				\parbox{\textwidth}{%
					\footnotesize{
						\fontfamily{cmr}\selectfont
						\textcopyright~2022 IEEE.  Personal use of this material is permitted.  Permission from IEEE must be obtained for all other uses, in any current or future media, including reprinting/republishing this material for advertising or promotional purposes, creating new collective works, for resale or redistribution to servers or lists, or reuse of any copyrighted component of this work in other works.
					}
				}
			}
		\end{picture}
	}
}

\def \mn {\mathbf{n}}
\def \mt {\mathbf{t}}


\begin{document}

\title{Sketch2PQ: Freeform Planar Quadrilateral Mesh Design via a Single Sketch}
\author{
 Zhi~Deng, Yang~Liu, Hao~Pan, Wassim~Jabi, Juyong~Zhang, Bailin~Deng$^\dagger$
 \IEEEcompsocitemizethanks{
 \IEEEcompsocthanksitem Z.~Deng is with School of Data Science, University of Science and Technology of China.
 \IEEEcompsocthanksitem H.~Pan and Y.~Liu are with Internet Graphics Group, Microsoft Research Asia.
 \IEEEcompsocthanksitem W.~Jabi is with the Welsh School of Architecture, Cardiff University.
 \IEEEcompsocthanksitem J.~Zhang is with School of Mathematical Sciences, University of Science and Technology of China.
 \IEEEcompsocthanksitem B.~Deng is with School of Computer Science and Informatics, Cardiff University.
 }
 \thanks{$^\dagger$Corresponding author. Email: \texttt{DengB3@cardiff.ac.uk}.}
 }

\IEEEtitleabstractindextext{%
 \begin{abstract}
The freeform architectural modeling process often involves two important stages: concept design and digital modeling. In the first stage, architects usually sketch the overall 3D shape and the panel layout on a physical or digital paper briefly. In the second stage, a digital 3D model is created using the sketch as a reference. The digital model needs to incorporate geometric requirements for its components, such as the planarity of panels due to consideration of construction costs, which can make the modeling process more challenging. 
In this work, we present a novel sketch-based system to bridge the concept design and digital modeling of freeform roof-like shapes represented as planar quadrilateral (PQ) meshes. Our system allows the user to sketch the surface boundary and contour lines under axonometric projection and supports the sketching of occluded regions. In addition, the user can sketch feature lines to provide directional guidance to the PQ mesh layout.
Given the 2D sketch input, we propose a deep neural network to infer in real-time the underlying surface shape along with a dense conjugate direction field, both of which are used to extract the final PQ mesh. To train and validate our network, we generate a large synthetic dataset that mimics architect sketching of freeform quadrilateral patches. 
The effectiveness and usability of our system are demonstrated with quantitative and qualitative evaluation as well as user studies.
\copyrightnote{}
 \end{abstract}

 \begin{IEEEkeywords}
 Freeform surface, architectural geometry, planar quadrilateral mesh, sketch-based modeling, deep learning
 \end{IEEEkeywords}}

\maketitle
\section{Introduction} \label{sec:intro}

\IEEEPARstart{F}{reeform}
3D architectural modeling usually starts with the concept design, which can be conveyed via 2D sketching. The sketching depicts the boundary and the profile of the surface, and directional strokes may be added to describe the panel layout (see Fig.~\ref{fig:design}  for two freeform roof designs via 2D sketching).
Afterwards, the 2D sketch needs to be converted into a 3D digital model. 
To do so, architects usually need to use 3D modeling software to manually create surface geometry according to the sketch and spend a considerable amount of time editing it into the desired shape. The created geometry is then decomposed into panels, which often involves computational routines that enforce geometric constraints related to construction requirements and design intent~\cite{pottmann2015architectural}.
This overall workflow can be time-consuming and requires professional knowledge at each step.

An alternative approach is sketch-based modeling~\cite{Olsen:2009:survey} which automatically converts 2D sketches into 3D shapes. The main obstacle in this process is the ambiguity in the mapping from 2D to 3D which makes the problem ill-posed. In addition, the sketches can be inaccurate and noisy, which makes the conversion problem even more challenging.
Over the past two decades, many approaches have been proposed to address these challenges. 
Geometry priors from shape classes~\cite{Nealen:2007:FiberMesh,Chen2008,Gingold2009Structured,Wu:2018:shaped} provide a strong cue to resolve the ambiguity, but it is not easy to generalize them to unseen shape classes such as freeform shapes.  Sketching from multiple views~\cite{Lun:2017:SketchModeling,Delanoy20183D} is a natural way to remove ambiguity and create a freeform design, but it could be inconvenient for the user to change the viewport frequently and provide more 2D sketches for conveying their rough design ideas. 
Moreover, panel layouts are an important part of freeform architectural design that induces geometric constraints, but they are rarely considered in existing approaches.

\begin{figure}[t]
    \centering
    \frame{\includegraphics[width=0.48\columnwidth]{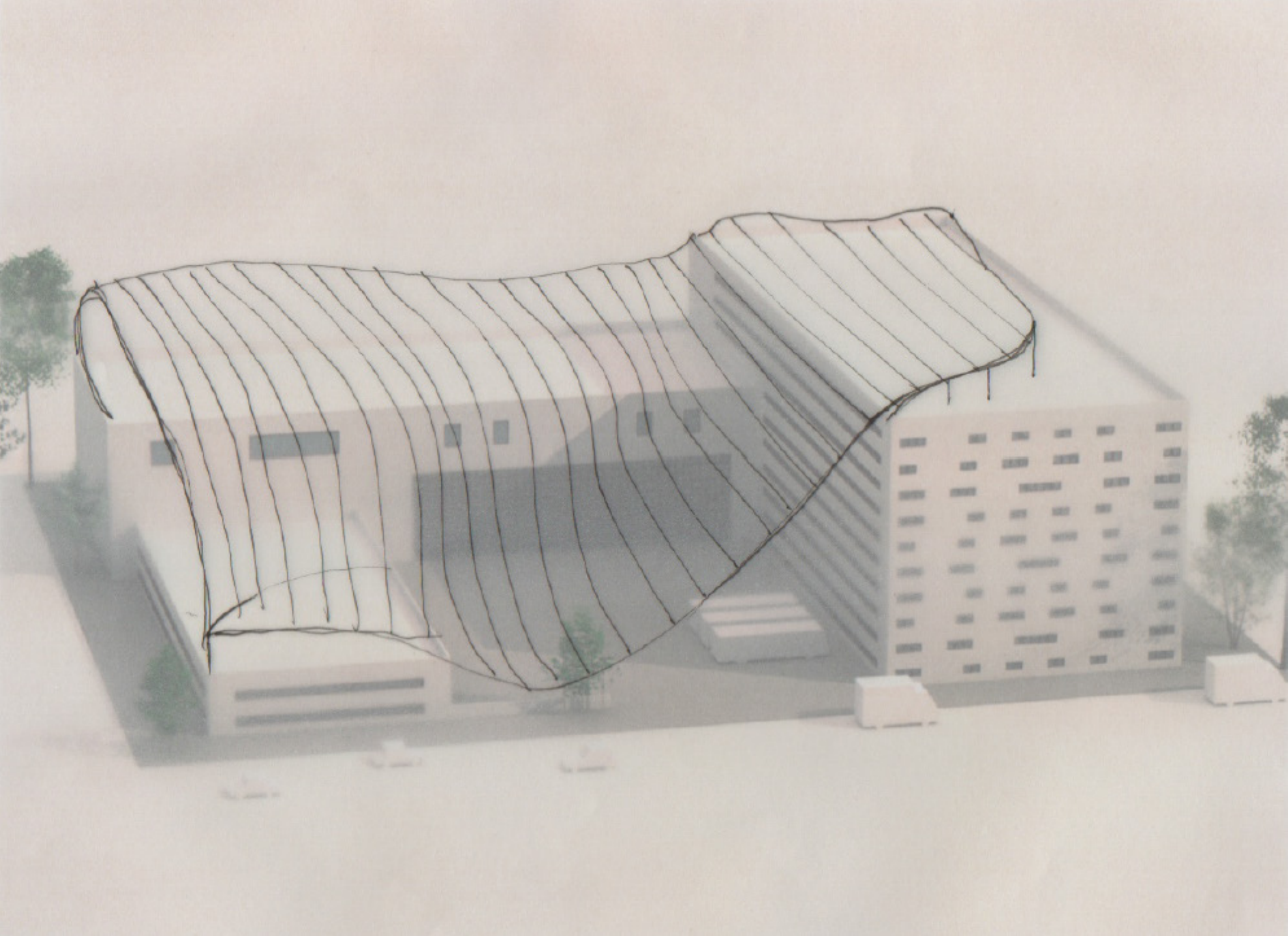}}
    \frame{\includegraphics[width=0.48\columnwidth]{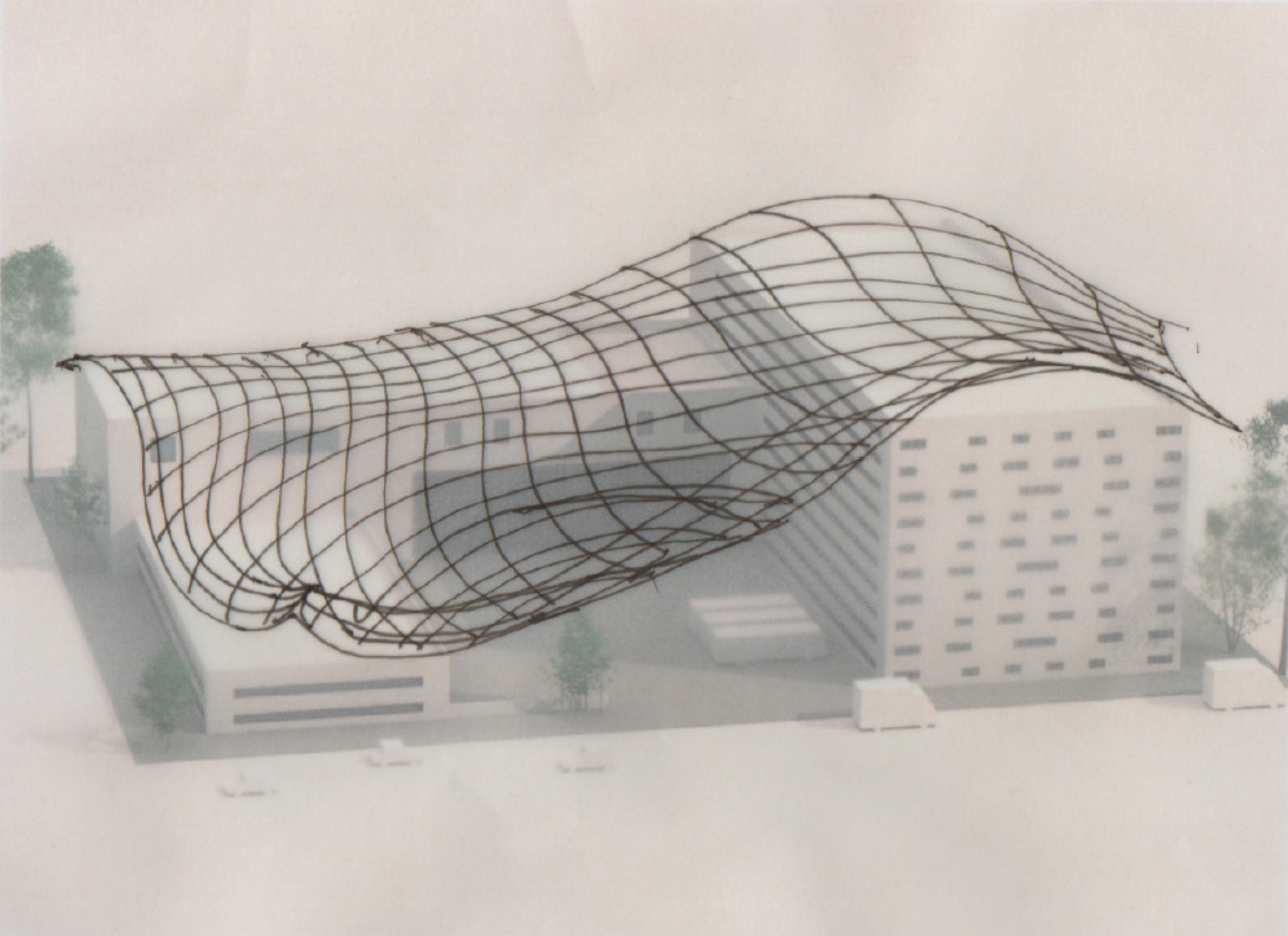}}
    \caption{Two different designs of freeform roofs via 2D sketching.}
    \label{fig:design}
\end{figure}

In this paper, we propose a sketch-based modeling system that allows users to design roof-like freeform architectural surfaces represented as planar quadrilateral (PQ) meshes. A PQ mesh represents a layout of flat panels with quadrilateral shapes, which is a preferable way to realize freeform glass structures due to their benefits in the construction cost~\cite{Glymph:2004}. To create such shapes, existing methods typically start from a reference surface that is modeled separately, and optimize a quadrilateral mesh to enforce face planarity while aligning with the reference shape~\cite{Liu:2006,Liu:2011}.
Our system enables a new modeling paradigm where the underlying surface shape and the PQ mesh layout are determined simultaneously from a single 2D sketch. 
To specify the surface structure, the user can draw the boundary lines and suggestive contour lines, as well as annotate the self-occluded parts. In addition, the user can draw sparse feature lines within the surface area to indicate the edge directions of the PQ mesh. 
To determine the mesh at an interactive rate, we train a deep neural network that takes the 2D sketch as input and first infers the depth and normal maps of the visible occluded regions, which are further fused via a network module to a smooth B-spline surface patch that represents the underlying surface shape. The network also infers a dense conjugate direction field (CDF) over the surface as an indication of the PQ mesh layout~\cite{Liu:2006}. 
The final PQ mesh is then extracted from the B-spline surface and the CDF.
To train and test our neural network, we propose a method to synthesize a large amount of data containing roof-like B-spline surfaces, their PQ mesh layouts, and the corresponding sketches.
We verify the effectiveness and usability of our system using a series of quantitative and qualitative evaluations, ablation studies, and user studies. To summarize, our contribution includes:
\begin{itemize}
    \item We develop a novel sketch-based modeling system for PQ meshes. Utilizing deep neural networks, our system can produce a PQ mesh from a single sketch at an interactive rate, providing an intuitive way to model the surface shape and the PQ mesh layout simultaneously.
    \item We propose a method to generate a large dataset of surfaces and sketches that is suitable for the training and testing of our network.
\end{itemize}

\begin{figure}[t]
    \centering
    \includegraphics[width=\linewidth]{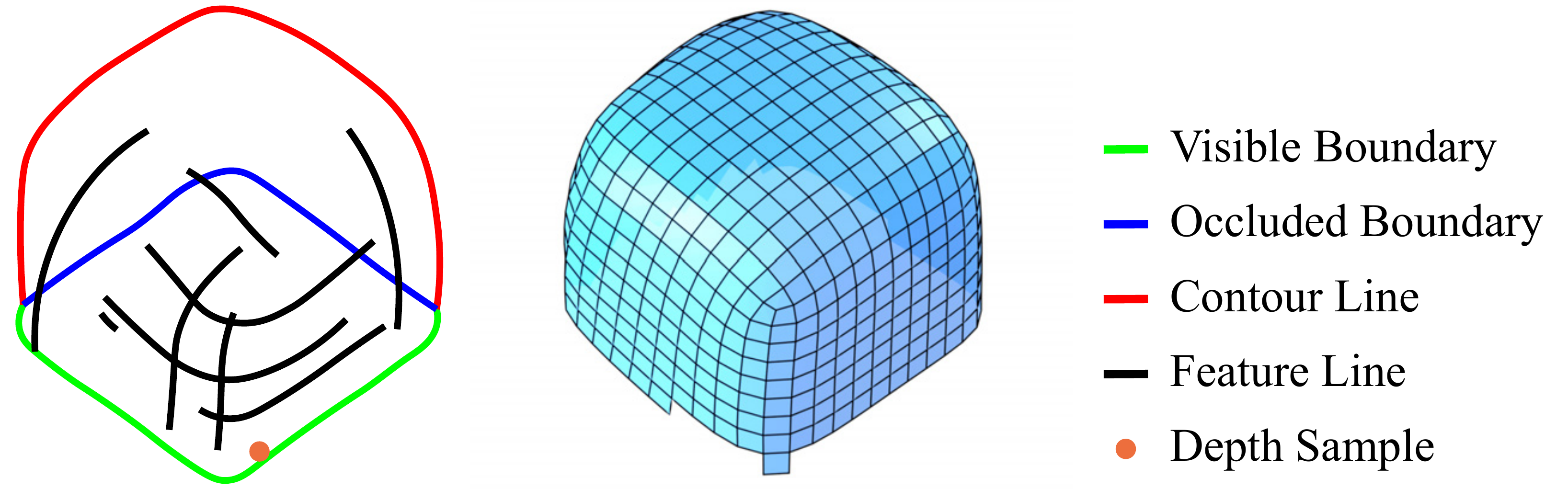}
    \caption{An example sketch using our system and the resulting PQ mesh.}
    \label{fig:Sketching_processing_diagram}
\end{figure}
 
\begin{figure*}[t]
 \includegraphics[width=\linewidth]{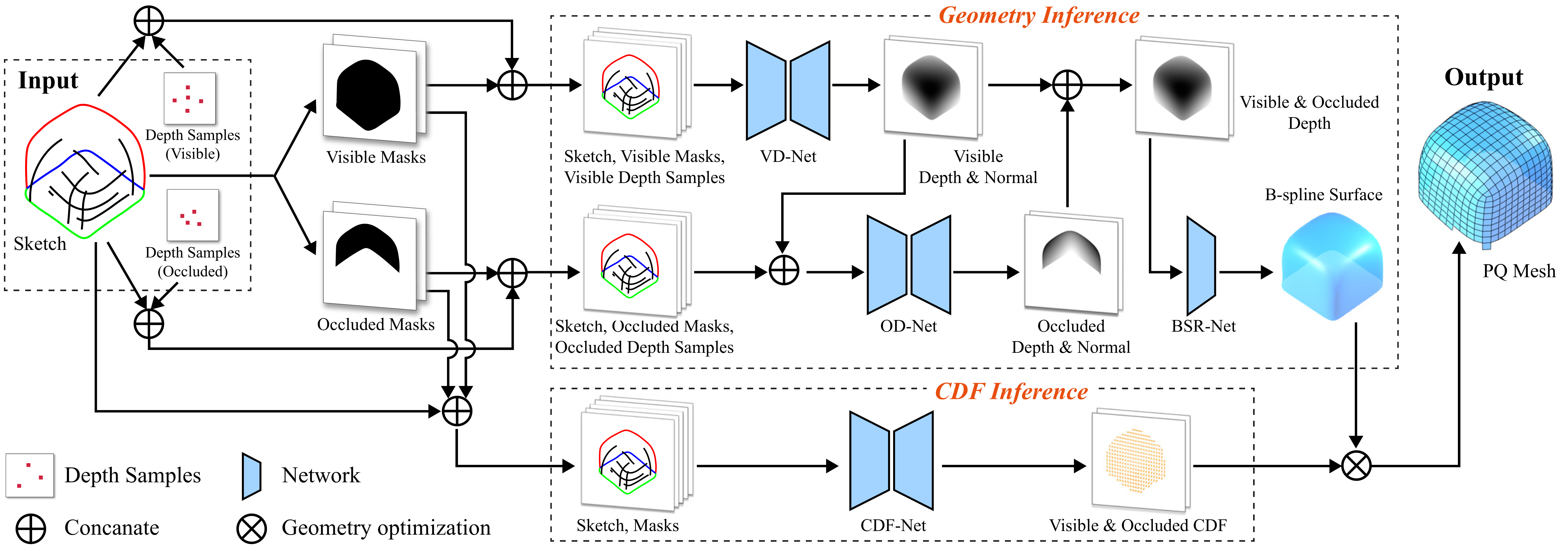}
 \caption{The algorithm pipeline of our Sketch2PQ system. The system takes stroke lines, depth samples, and the visible and occluded region masks induced from the sketch as input. 
 A geometry inference module predicts the depth and normal maps for the visible and occluded regions, and use them to infer a B-spline surface. 
 In addition, a conjugate direction field (CDF) inference module predicts a CDF that approximates the PQ mesh layout. Finally, a PQ mesh is extracted from the B-spline surface and the CDF via geometry optimization.}
 \label{fig:algorithm}
\end{figure*}

\section{Related Work}

In this section, we briefly review existing works closely related to our paper. More in-depth exposition of differential geometry and deep learning can be found in~\cite{doCarmo1976} and \cite{Goodfellow2016}.

\myparagraph{Sketch-based modeling} Sketch-based modeling has been studied extensively in the past. In the following, we focus on methods that create 3D shapes from 2D sketches without the need for an existing 3D model, as the same scenario is considered in this paper. The readers are referred to~\cite{Olsen:2009:survey,Cordier:2016:SBM} for more complete overviews that cover other scenarios.

Many existing works utilize geometry priors from human experience or specific object categories to determine the 3D shapes.
Igarashi \etal~\cite{Igarashi:1999:Teddy} utilized regular shape composition to model stuffed animals and other rotund objects. Tai \etal ~\cite{Tai:2004:ConvSurf} designed 3d models from sketched silhouette curves based on a convolution surface model. Schmidt \etal ~\cite{Schmidt:2005:BlobTree} increased the shape complexity by using Hierarchical Implicit Volume Models as an underlying shape representation and applying blending and CSG operations to implicit volumes inflated from 2D contours.
To model 3D trees, Chen \etal~\cite{Chen2008Tree} first constructed a 3D branch model using a database of trees as prior, and then added leaves based on botanical rules. Chen \etal~\cite{Chen2008} constructed an architectural model from a freehand sketch by identifying primitive geometries, detailed geometries, and textures using databases of such elements as priors.
Gingold \etal~\cite{Gingold2009Structured} developed a sketching system where the user can specify semantic information that helps to interpret the 2D sketch. Rivers \etal ~\cite{Rivers20103D} modeled 3D objects from multi-view silhouettes by constructing 3D parts from the silhouettes and assembling them using CSG operations. Entem \etal~\cite{Entem14} used smoothness and structural symmetry priors to model 3D animals from a side-view sketch.
Jung~\etal ~\cite{Jung:2015:SketchingFolds} inferred quasi-developable surfaces with pre-designed folds from multi-view sketches, by interleaving developability optimization with identification of silhouette points.
Dvoro\v{z}\v{n}\'{a}k \etal ~\cite{Dvoroznak2020Monster} combined 3D inflation with a rigidity-preserving and layered deformation model to produce a smooth 3D mesh from a single-view sketch.
Li \etal~\cite{BendSketch} utilized detailed curvature hints in the 2D sketch to infer local shapes and model freeform surfaces. Similarly, the sparse feature lines used in our system follow the PQ mesh layout and provide a strong hint for the local curvature.

Recently, learning-based approaches have become a promising way to handle different types of sketch inputs and predict 3D shapes robustly. 
Huang \etal~\cite{huang2016shape} and Nishida \etal ~\cite{Nishida2016Interactive} utilized  deep convolutional neural networks (CNN) to predict procedural modeling parameters from the input sketch. Han \etal~\cite{Han:2017:sketch2face} proposed a sketching system for 3D face and caricature modeling, using a CNN-based deep regression network to infer the coefficients for a bilinear face representation.
Lun \etal~\cite{Lun:2017:SketchModeling} proposed an encoder-decoder network to infer multi-view depth and normal maps from the input sketch, and consolidated them into a point cloud via optimization.
Su \etal~\cite{FuHongBo2018} inferred normal maps from 2D sketches by treating the process as an image translation problem and solving it using a  Generative Adversarial Network (GAN) framework. 
A GAN framework was also used in \cite{wang:2018a:generatemodel} to reconstruct 3D models from hand-drawn sketches.
Delanoy \etal~\cite{Delanoy20183D} proposed an encoder-decoder network to predict the occupancy of 3D voxels from a line drawing. 
To infer freeform surfaces from sparse sketches, Li \etal~\cite{Li:2018:SketchCNN} used CNNs to infer the depth and normal maps representing the surface, with an intermediate layer that models the curvature direction field and produces a confidence map to improve robustness.
In comparison, we also utilize neural networks to infer depth maps for the target surface, but with additional modules to infer conjugate direction fields that help to extract a PQ mesh.
Du \etal~\cite{Du2020SAniHead} created animal-like 3D head meshes from dual-view sketches, using Graph Convolutional Neural Networks to initialize and refine the model.
Yan \etal ~\cite{Yan2020Interactive} adopted a conditional generative adversarial network (CGAN) trained with physics-based simulation data to produce raw liquid splash models from input sketches, and further refined them to achieve realistic results.
Han \etal~\cite{Han2020Reconstructing} reconstructed a 3D model from multiple sketches by optimizing the 3D shape to match a set of attenuance images generated from the sketches using a CGAN.
Li \etal~\cite{Li:2020:Sketch2CAD} presented a CAD modeling system where a deep neural network infers shape edits from the user sketches to incremental create a 3D object.
Yang \etal~\cite{YANG2021Learning} proposed a two-stage learning framework for 3D face reconstruction from sketches, integrating the knowledge of face reconstruction from
photos. 
To generate 3D shapes from poorly-drawn sketches, Zhang \etal~\cite{Zhang_2021_CVPR} proposed an encoder-decoder architecture that disentangles the shape and the viewpoint in the latent space.
Smirnov \etal~\cite{smirnov2021patches} used a deformable template consisting of Coons patches to learn 3D man-made shapes from 2D sketches.

\myparagraph{PQ meshes}
Discretization of freeform surfaces using PQ meshes has been studied from both theoretical and practical perspectives. 
Liu \etal~\cite{Liu:2006} noted that PQ meshes are discrete counterparts of conjugate curve networks; they derived special PQ meshes including conical and circular meshes from networks of principal curvature lines, a special type of conjugate curve network.
Pottmann \etal ~\cite{Pottmann2007b} proposed methods to compute parallel meshes with planar faces, with applications for multiple-layer freeform structures. Zadravec \etal~\cite{zadravec-2010-vf} and Liu \etal~\cite{Liu:2011} designed general PQ meshes by computing smooth conjugate direction fields (CDF) on the reference surface. The CDF computation can also be done by optimizing an $N$-PolyVector field~\cite{ComplexRoots:Diamanti:2014} without the need for integer variables. All the methods above seek a PQ mesh that approximates an existing surface. Another way of PQ mesh design is to explore the shape space of PQ meshes with the same mesh topology but different vertex positions~\cite{Yang2011,Vaxman2012,Zhao2013Intuitive,Deng2013Exploring,Vaxman2014,DENG201513,Poranne2015}, starting from an existing PQ shape.
Neither approach is suitable for sketch-based modeling of PQ meshes, since they both require an initial 3D shape which is not available from the user input.

\section{The Sketch2PQ modeling system} \label{sec:system}
We develop a sketch-based modeling system for freeform PQ structures (see Fig.~\ref{fig:Sketching_processing_diagram} for its pipeline). The system supports four types of strokes for depicting the surface structure and the PQ mesh layout (Sec.~\ref{subsec:ui}). The sketch image is fed to a neural network to infer a B-spline surface for the underlying shape, as well as a conjugate direction field (CDF) that indicates the PQ mesh layout. The final PQ mesh is generated from the surface shape and the CDF (Sec.~\ref{subsec:network}).

\subsection{System Input} \label{subsec:ui}

\firstparagraph{Stroke types}
The input to our system is an architectural style sketch represented as an RGB image, depicting the axonometric projection of a roof-like surface with a boundary. Following the design habit of architectural users, we allow two classes of stroke lines (see Fig.~\ref{fig:Sketching_processing_diagram} for an example):
\begin{itemize}
 \item \emph{Structural lines} convey the structural information of the surface. There are three types of structural lines:
 \begin{itemize}
     \item A \emph{visible boundary} is a visible part of the surface boundary; it is shown in \textbf{{green}}.
     \item An \emph{occluded boundary} is an occluded part of the surface boundary; it is shown in \textbf{{blue}}.
     \item A \emph{contour line} is part of the boundary of the surface area on the canvas but is not part of the 3D surface boundary; it is shown in \textbf{{red}}.
 \end{itemize}
 \item \emph{Feature lines} are the direction lines that the resulting PQ mesh edges should follow; they are shown in \textbf{black}.
\end{itemize}
Similar to~\cite{BendSketch}, after the user draws a stroke line on the 2D canvas, it is vectorized to a cubic B-spline curve via curve approximation for easy editing. To reduce ambiguity, we assume that the features lines are drawn on visible surface regions only.
The vectorized sparse sketch is rendered on a 256$\times$256 RGB image with a line width of three pixels, and we apply a 3$\times$3 Gaussian filter on the original image.

\begin{wrapfigure}{r}{0.5\columnwidth}
  \centering
  \vspace*{-1.5ex}
  \includegraphics[width=0.49\columnwidth]{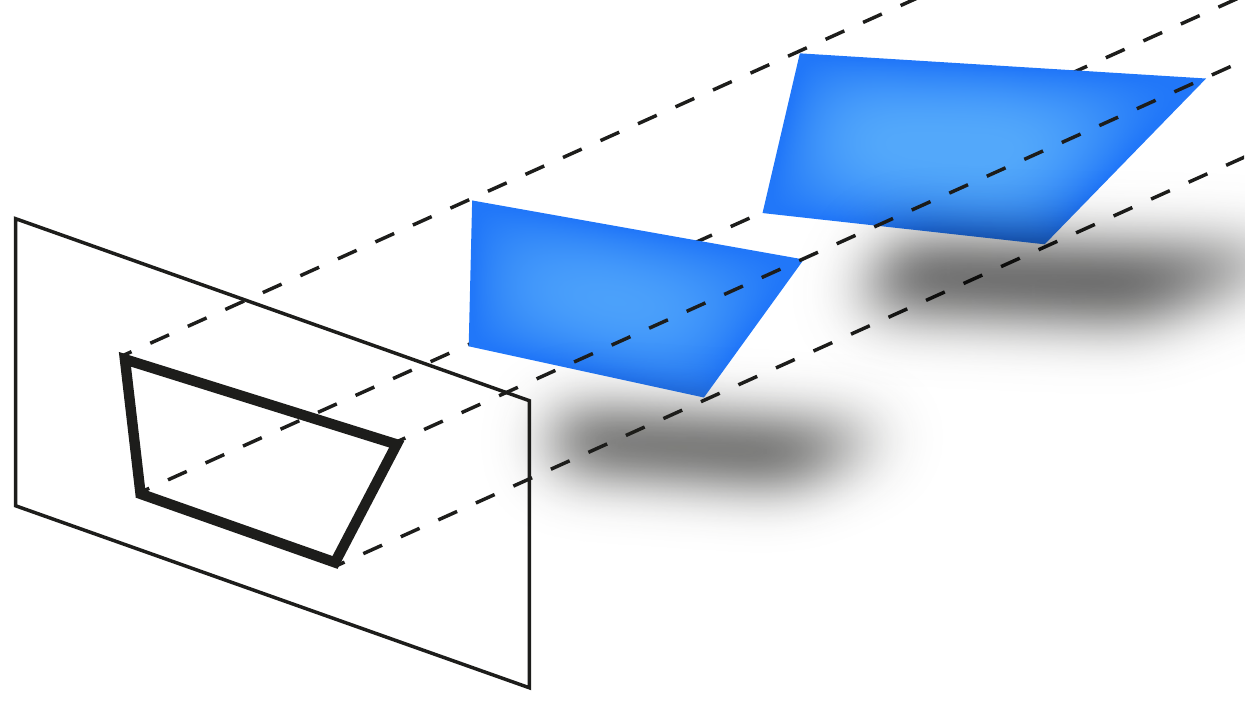}
  \vspace*{-2ex}
\end{wrapfigure}
The stroke lines roughly follow the boundaries, contours, and a subset of the interior edges of a PQ mesh. This can only determine the PQ mesh shape up to scaling and translation along the view direction. This is because for any quadrilateral with co-planar vertices in the 3D space, scaling and/or translation along 
the view direction will keep the vertices co-planar while producing the same projection onto the viewing plane 
(see the inset figure for an example). As a result, there is a family of PQ meshes with the same projection onto the viewing plane. 
Our system only returns one PQ mesh among the valid solutions. To enable more fine-grained control, we allow depth samples as optional inputs to provide depth cues at certain parts of the surface, similar to~\cite{Li:2018:SketchCNN}. 
As each depth sample can be specified for either a visible or occluded part of the surface, we store two depth sample maps for the visible parts and the occluded parts, respectively. Each is represented as a single-channel map, which contains the specified depth values for pixels corresponding to the sample points and zero values for other pixels (note that we assume the whole 3D surface have positive depth values).

\begin{figure}[t]
    \centering
    \includegraphics[width=0.7\columnwidth]{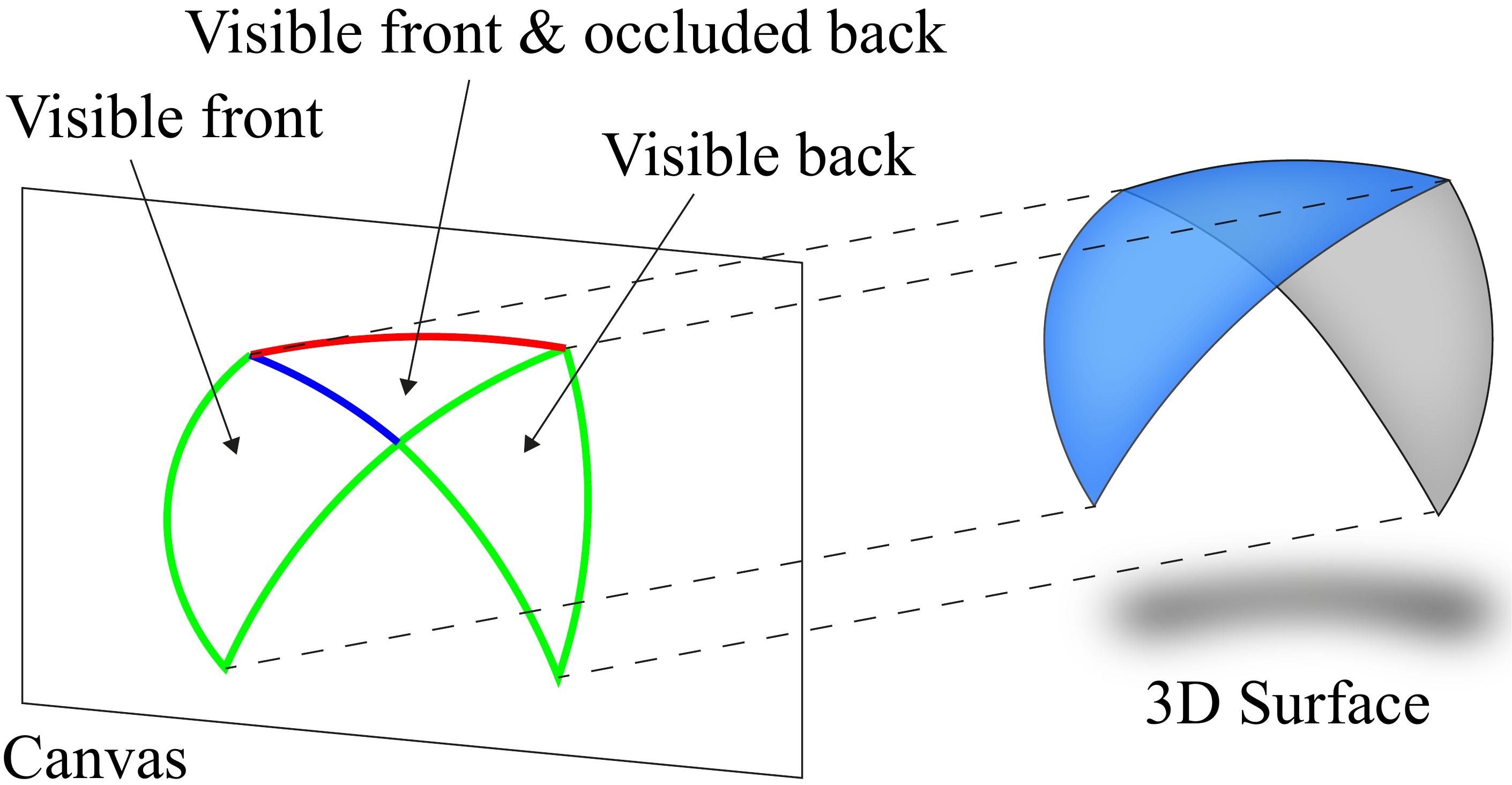}
    \caption{An example of visibility deduced from a sketch.}
    \label{fig:SketchVisibility}
\end{figure}

\myparagraph{Visibility and orientation masks}
The input sketch is first processed to derive visibility attributes that facilitate the inference of the surface shape. The attributes are fed into the network together with the user input.
We consider the sketch as a union of disjoint regions (see Fig.~\ref{fig:SketchVisibility}), where each region is enclosed by structural lines (i.e., visible/occluded boundaries and contour lines) and contains no other structural lines in the interior. We call such a region a \emph{minimal region}.
All minimal regions can be identified using a seeded region growing algorithm~\cite{Seed_Adams_1994}. 
For each minimal region, we determine: (1)~whether it corresponds to a visible or occluded part of the surface, and (2)~the orientation of the corresponding surface part, i.e., whether its side facing the canvas is the front side or the back side of the surface. 
To determine the properties, we make the following assumptions:
\begin{assumption}
\label{assump:SurfaceConditions}
The surface and the sketch satisfy:
\begin{enumerate}[label=(\alph*)]
    \item The surface is of disk topology, and has no self-intersection.
    \item The surface projects onto a disk-topology region in the canvas.
    \item Any straight line parallel to the view direction has no more than two intersection points with the surface.
    \item For a minimal region that corresponds to an occluded part of the surface, its enclosing structural lines must contain a contour line.
    \item Two structural lines can only overlap at isolated points.
    \item Among all visible parts of the surface, those with their front side facing the canvas have a larger total projected area on the canvas than those with their back side facing the canvas.
\end{enumerate}
\end{assumption}
These assumptions help to simplify the problem while covering a sufficiently large space of surface shapes.
Under the assumptions, each minimal region corresponds to a visible part of the surface. We determine the orientation of all visible parts based on the adjacency relation between their corresponding minimal regions and the following properties:
\begin{proposition}
\label{prop:Orientation}
The orientations of the visible parts satisfy:
\begin{enumerate}[label=(\alph*)]
    \item If two minimal regions share a visible boundary line, then their corresponding visible parts have opposite orientations.
    \item If two minimal regions share an occluded boundary line, then their corresponding visible parts have the same orientation.
\end{enumerate}
\end{proposition}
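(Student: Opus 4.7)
My plan is to prove both parts of Proposition~\ref{prop:Orientation} by a local geometric analysis at the shared boundary line, together with a parity argument for silhouette crossings on $S$. I set the view direction to $-\hat{z}$, so $\pi(x,y,z)=(x,y)$ and larger $z$ means closer to the viewer. Since $S$ is a disk by Assumption~\ref{assump:SurfaceConditions}(a) it is orientable: I fix a global orientation and let $\vec{n}$ be the associated unit normal, so that the orientation of a minimal region $R$ is determined by the sign of $n_z$ on its top preimage. A silhouette curve is a curve of $S$ whose tangent plane contains $\hat{z}$; $n_z$ vanishes on it and changes sign across it. Under Assumption~\ref{assump:SurfaceConditions}(c), silhouettes lie in the interior of $S$ and project exactly onto the contour lines of the sketch.

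For part (b), take $L$ to be an occluded boundary shared by $R_1$ and $R_2$, and let $B_o\subset\partial S$ be its lift on the bottom sheet. For $p'\in L$ write $\pi^{-1}(p')=\{b_t,b_o\}$ with $b_o\in B_o$. By Assumption~\ref{assump:SurfaceConditions}(e), generically $p'$ does not lie on a visible boundary or contour, so $b_t$ is interior to $S$, and the smooth patch of $S$ through $b_t$ is locally a graph over a neighborhood of $p'$ that extends to both sides of $L$. This single connected patch serves as the top sheet for both $R_1$ and $R_2$ near $p'$; it contains no silhouette, so the sign of $n_z$ on it is constant and $R_1$ and $R_2$ have the same orientation.

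For part (a), take $L$ to be a visible boundary shared by $R_1,R_2$ and $B\subset\partial S$ its lift on the top sheet. At $p'\in L$ the preimage is $\{b,s\}$ with $b\in B$ and $s$ interior. The top sheet $T_1$ through $b$ has $B$ as its own boundary and extends to only one side of $L$, which I call $R_1$. The sheet $E$ through $s$ is smooth across $L$, extends to both sides, and becomes the unique (hence top) sheet $T_2$ above $R_2$; therefore $\operatorname{sign} n_z(T_2)=\operatorname{sign} n_z(s)$. To compare $n_z(b)$ and $n_z(s)$, I look at the two sheets $T_1$ and $E|_{R_1}$ above $R_1$. The boundary $\partial T_1$ inside $S$ consists of arcs on $\partial S$ (lifts of visible boundaries of $\partial R_1$) together with silhouette arcs (lifts of contour lines of $\partial R_1$). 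If $\partial T_1$ contained no silhouette arc, then $\partial T_1$ would be a closed curve lying entirely in the single circle $\partial S$, forcing $T_1=S$ and hence $R_1=\pi(S)$, contradicting the existence of $R_2$. So $T_1$ and $E|_{R_1}$ must share at least one silhouette arc $\gamma$, and a short transverse path across $\gamma$ from $T_1$ to $E|_{R_1}$ makes exactly one silhouette crossing and flips the sign of $n_z$. Since $S$ is simply connected, the parity of silhouette crossings along any path in $S$ between two fixed endpoints is an invariant, so $\operatorname{sign} n_z(b)=-\operatorname{sign} n_z(s)=-\operatorname{sign} n_z(T_2)$, giving opposite orientations.

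The main obstacle is the topological step that $\partial T_1$ must contain a silhouette arc. Establishing it requires combining the disk topology of $S$ with Assumptions~\ref{assump:SurfaceConditions}(b) and \ref{assump:SurfaceConditions}(c) and the non-overlap condition of Assumption~\ref{assump:SurfaceConditions}(e) to rule out closed sub-loops of $\partial S$ that would bypass silhouettes, and also to confirm that $T_1$ has disk topology inherited from $R_1$ so that $\partial T_1$ is a single closed curve. The parity-invariance of silhouette crossings is standard for a 1-submanifold in a simply connected surface, but it still relies on a transversality argument that silhouettes, paths, and lifts of structural lines are in general position, which is implicit in the smoothness of $S$.
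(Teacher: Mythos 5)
Your part~(b) is sound and is essentially a local, differential-geometric rendering of the paper's argument (the visible sheet continues smoothly across the lift of the occluded boundary, so the sign of $n_z$ cannot change there); likewise, your reduction of part~(a) to the claim that the visible sheet $T_1$ and the occluded sheet $E|_{R_1}$ over the \emph{same} minimal region $R_1$ have opposite orientations is the right reduction (it is the paper's Lemma~\ref{lemmar:2}). Note, though, that your starting point ``the preimage of $p'\in L$ is $\{b,s\}$ with $s$ interior'' is itself not free: the existence of a second preimage and the fact that it cannot lie on $\partial S$ is exactly the paper's Lemma~\ref{lemmar:1}, proved from Assumption~\ref{assump:SurfaceConditions}(e), and deserves at least a sentence.

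The genuine gap is the step you yourself flag: the claim that $\partial T_1$ consists only of arcs on $\partial S$ and silhouette arcs, so that absence of a silhouette arc forces $T_1=S$. This dichotomy is false as stated: $\partial R_1$ may also contain \emph{occluded} boundary lines, over which $T_1$ continues smoothly into the visible sheet of the adjacent minimal region; the corresponding frontier arcs of $T_1$ lie neither on $\partial S$ nor on a silhouette, so the contradiction ``$\partial T_1\subset\partial S \Rightarrow T_1=S$'' does not follow. Moreover, even granting a silhouette arc somewhere on $\partial T_1$, you still need it to be a \emph{common} fold of $T_1$ and $E|_{R_1}$ (i.e., that the fold opens toward $R_1$ and its two local sheets are precisely the visible and occluded sheets over $R_1$); your proposal asserts this sharing without argument, and justifying it requires the two-sheet condition of Assumption~\ref{assump:SurfaceConditions}(c) together with the non-overlap condition (e). The paper avoids this entire difficulty by not attempting to derive the fold incidence topologically: Assumption~\ref{assump:SurfaceConditions}(d) postulates that a minimal region with an occluded part is bounded by a contour line, and Lemma~\ref{lemmar:2} then uses Assumption~\ref{assump:SurfaceConditions}(c) to conclude that this contour corresponds to a fold joining the visible and occluded parts over that region, hence opposite orientations. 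If you replace your topological step by an appeal to Assumption~\ref{assump:SurfaceConditions}(d) (which you never use), the rest of your sign-of-$n_z$ analysis goes through and your proof becomes, in substance, the paper's; as written, however, the key step is not established, and you are in effect trying to prove a statement that the paper deliberately takes as a hypothesis. The parity-invariance discussion is superfluous once each sheet carries a constant sign of $n_z$ and the shared fold flips it.
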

A proof is given in Appendix~\ref{appx:Proof}.
Accordingly, we first pick an arbitrary minimal region and set the orientation of its corresponding visible part to be front, then propagate the orientation information to all visible parts using Proposition~\ref{prop:Orientation}. We then compute the total areas of the minimal regions with front and back orientations, respectively. If the back regions have a larger total area, we negate the orientation of all visible parts according to Assumption~\ref{assump:SurfaceConditions}(f).
Finally, Assumption~\ref{assump:SurfaceConditions}(d) implies that each minimal region incident with a contour line corresponds to a visible part and an occluded part with opposite orientations, and we use this rule to identify all occluded parts and their orientations.

We divide all surface parts into four categories based on their visibility and orientation: \emph{visible-front}, \emph{visible-back}, \emph{occluded-front}, and \emph{occluded-back}. For each category, we construct a binary mask indicating all minimal regions that correspond to surface parts of the category (see Fig.~\ref{fig:VisibilityOrientation} for some examples).
These  masks are fed to the neural networks along with the input sketch and depth sample maps.
They are also used to post-process the output of the networks, by discarding information that is outside the surface region. 

\begin{figure}[t]
    \centering
    \includegraphics[width=\columnwidth]{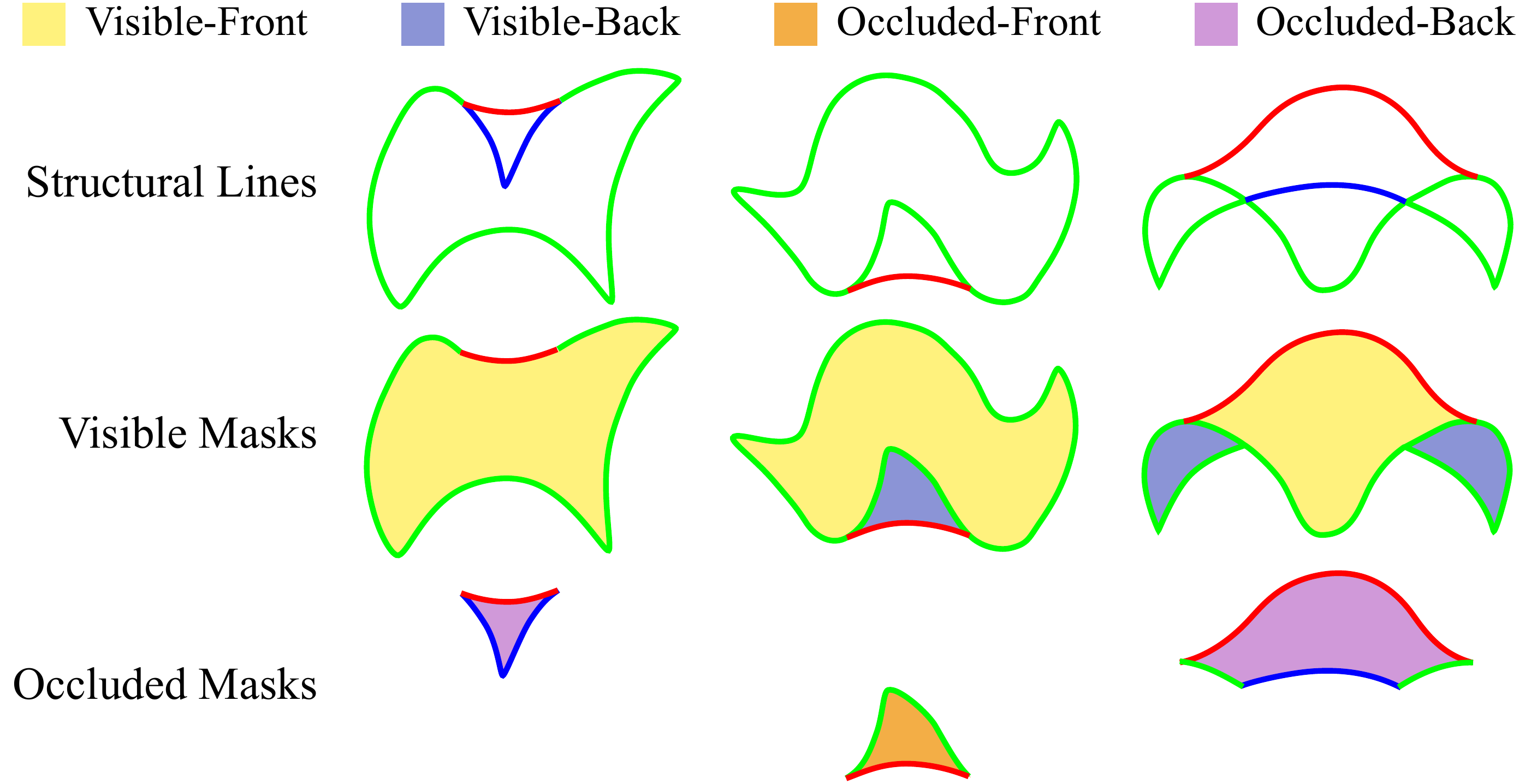}
    \caption{Examples of visibility-orientation masks for different sketches. Front and back masks of the same visibility are shown together.}
    \label{fig:VisibilityOrientation}
\end{figure}

\subsection{Network Structure and Loss Functions} \label{subsec:network}
The user input and  the visibility-orientation masks obtained in Sec.~\ref{subsec:ui} are used to derive a PQ mesh. As the problem is highly non-linear, direct inference with a single network often produces unsatisfactory results. We instead utilize multiple modules to infer different shape attributes that determine the final PQ mesh.
Specifically, we first use two network modules to predict the depth and normal maps for the visible and occluded parts of the surface, respectively (Secs.~\ref{subsubsec:vd} \& \ref{subsubsec:od}). The depths are fed to another network module to infer the underlying surface shape represented as a B-spline surface (Sec.~\ref{subsubsec:surf}). In addition, a network module predicts a conjugate direction field (CDF) that serves as an indication of the PQ mesh layout~\cite{Liu:2006} (Sec.~\ref{subsubsec:cdf}). Finally, a geometry optimization is performed to compute the final PQ mesh according to the  B-spline surface and the CDF (Sec.~\ref{subsubsec:pq}).

\subsubsection{Depth and normal prediction for visible parts} \label{subsubsec:vd}
Given the input sketch, the visible-front and visible-back masks, and the visible depth samples, we adopt a U-Net like network module called \emph{VD-Net} (see Fig.~\ref{fig:Net_work_depth_normal}) to recover the depth and outward normal maps over the 2D canvas domain for the visible surface parts, similar to~\cite{Li:2018:SketchCNN}. Its input contains six channels: three for the RGB sketch image, two for the visible-front and visible-back masks, and one for the (optional) depth samples. Our VD-Net has five domain resolutions, from 256$\times$256 to 16$\times$16. The output is in eight channels: two for the depth maps of the visible-front and visible-back surface parts, and six for their normal maps.
The network is trained using the following loss function:
\begin{equation}
 \loss{v} = \lvisdat{} + \wvissmooth{} \lvissmooth{}  + \wviscons{} \lviscons{} + \wvisdepth{} \lvisdepth{},
 \label{eq:VDLoss}
\end{equation}
where the terms $\lvisdat{}, \lvissmooth{}, \lviscons{}, \lvisdepth{}$ enforce different properties as explained below. Their weights are empirically chosen as $\wvissmooth{}=0.2$, $\wviscons{} = 0.1$ and $\wvisdepth{} = 10$.

\myparagraph{Data term} The term $\lvisdat{}$ penalizes the $L_2$ error between the predicted depth/normal maps and their ground-truth counterparts on the visible surface parts:
\begin{equation}
 \lvisdat{} = \frac{1}{|\vispixset{}|} \sum\nolimits_{i \in \vispixset{}} \left( \vddatweightdepth{} (d_i-\gt{d_i})^2 + \vddatweightnormal{} \| \mn_i - \gt{\mn_i}\|^2 \right), 
  \label{eq:VDDataTerm}
\end{equation}
where $\vispixset{}$ denotes the set of pixels for the visible surface parts, and $d_i,\gt{d_i} \in \mathbb{R}$, $\mn_i,\gt{\mn_i} \in \mathbb{R}^3$ are the predicted and ground-truth depth and normal at a pixel $i$, respectively. The weights $\vddatweightdepth{}, \vddatweightnormal{}$ are set to $\vddatweightdepth{} = 1.0, \vddatweightnormal{} = 0.1$.

\myparagraph{Smoothness term} $\lvissmooth{}$ is a smoothness regularization term for the depth and normal maps, which  penalizes the difference between the values in neighboring pixels:
\begin{equation}
\lvissmooth{} = \frac{1}{|\visnbset{}|}\sum_{(i,j) \in \visnbset}\left(\vdsmoweightdepth{} (d_i - d_j)^2 + \vdsmoweightnormal{} \|\mn_i - \mn_j\|^2\right).
\end{equation}
Here $\visnbset{}$ is the set of neighboring visible pixels with the same orientation, and the weights are set to $\vdsmoweightdepth{} = 0.01, \vdsmoweightnormal{} = 0.001$.

\myparagraph{Depth-normal compatibility term} The predicted normal vectors and depth values should be compatible with each other. This is enforced via the following term: 
\begin{equation*}
\lviscons{} = \frac{1}{|\visxnbset{}|} \sum\nolimits_{i \in \visxnbset{}} (\mn_i \cdot \mt_{i,x})^2 + \frac{1}{|\visynbset{}|} \sum\nolimits_{i \in \visynbset{}} (\mn_i \cdot \mt_{i,y})^2,
\end{equation*}
where $\visxnbset{}$ is the set of visible pixels where their right neighbor pixels are also visible with the same orientation, i.e., $\visxnbset{} = \{i \mid (i, \rightneighbor{i}) \in \visnbset{}\}$ where $\rightneighbor{\cdot}$ denotes the right neighbor pixel. Similarly,~$\visynbset{} = \{i  \mid (i, \upneighbor{i}) \in \visnbset{}\}$ where $\upneighbor{\cdot}$ denotes the upper neighbor pixel.
$\mt_{i,x} = (1, 0, (d_{\rightneighbor{i}} -d_{i})/\pixwidth{})$, 
$\mt_{i,y} = (1, 0, (d_{\upneighbor{i}} -d_{i})/\pixwidth{})$ are tangent vectors at pixel $i$ computed by finite difference, where $\pixwidth{}$ is the pixel width when mapped to the canonical scale of the training data. $\lviscons{}$ enforces orthogonality between the predicted normals and the tangent vectors derived from the predicted depths.

\myparagraph{Depth sample term}  Finally, $\vissamplepixset{}$ requires the predicted depth to be close to the provided depth samples:
\begin{equation}
 \lvisdepth{} = 
 \begin{cases}
 \frac{1}{|\vissamplepixset{}|}\sum_{i \in \vissamplepixset{}} (d_i - \spd{d_i})^2, & \text{if}~\vissamplepixset{} \neq \emptyset,\\
 0 & \text{otherwise}.
 \end{cases}
\end{equation}
where $\vissamplepixset{}$ is the set of visible pixels with depth samples, and $\spd{d_i}$ is the provided depth value for pixel $i$.

\begin{figure}[t]
 \includegraphics[width=\linewidth]{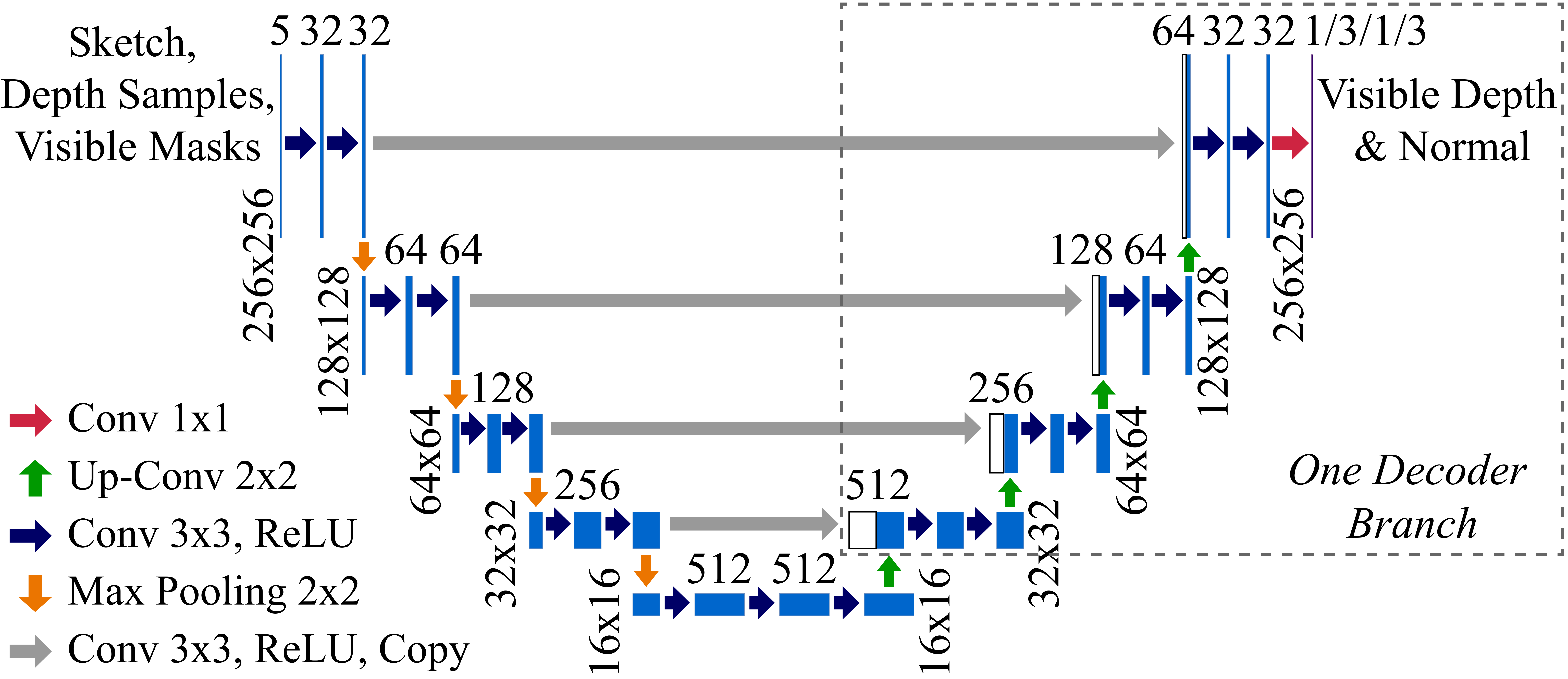}
 \caption{The VD-Net is an encoder-decoder network with five domain resolutions. The input is a 6-channel image containing the sketch, the visible-front/back masks and the visible depth samples. Four decoder branches share the same encoder and predict the depth \& normal maps for the visible front and back regions, respectively. }
 \label{fig:Net_work_depth_normal}
\end{figure}

\subsubsection{Depth and normal prediction for occluded parts} \label{subsubsec:od}
For the occluded parts, a setup similar to Sec.~\ref{subsubsec:vd} may be used to infer the depth and outward normal maps. However, the results may be inconsistent with their counterparts for the visible regions, i.e., the depth and normal functions may be discontinuous across the contour lines.
Therefore, we adapt the network structure and loss function to enforce the consistency. Specifically, we introduce a network structure, called \emph{OD-Net}, with its input being the sketch, the occluded-front and occluded-back masks, the occluded depth samples, and the depth and normal maps produced by the VD-Net.
The remaining parts of the network are the same as the VD-Net. We train the network using the following loss function:
\begin{equation}
\begin{aligned}
 \loss{o} = & ~\loccdat{} + \woccsmooth{} \loccsmooth{}  + \wocccons{} \locccons{}\\
 &~+ \woccdepth{} \loccdepth{}
 + \wgc{} \lgc{}.
\end{aligned}
 \label{eq:ODLoss}
\end{equation}
Here $\loccdat{}$,  $\loccsmooth{}$, $\locccons{}$, $\loccdepth{}$ are the data term, smoothness term, depth-normal compatibility term, and depth sample term for the occluded parts respectively, and are defined for the set $\occpixset{}$ of occluded pixels in a way similar to Sec.~\ref{subsubsec:vd}.  $\lgc{}$ is a geometric consistency term between the visible and occluded regions near the contour line:
\begin{equation}
\begin{aligned}
\lgc{} = \frac{1}{|\Omega|} \sum_{(i,j) \in \Omega} & \left[\gcweightdepth{} ( (d_i^{\text{v}} - d_j^{\text{o}})^2 + (d_i^{\text{o}} - d_j^{\text{v}})^2)\right. \\
& \left.~+  \gcweightnormal{} (\|\mn_i^{\text{v}} - \mn_j^{\text{o}}\|^2 + \|\mn_i^{\text{o}} - \mn_j^{\text{v}}\|^2)\right].
\end{aligned}
 \label{eq:GeomCons}
\end{equation}
Here $\Omega = \{(i,j) \mid i, j \in \mathcal{C} \cap \vispixset{} \cap \occpixset{}, j \in \fourneighborset{i} \}$, where $\mathcal{C}$ is the set of pixels that are no more than four pixels away from the contour line, and  $\fourneighborset{i}$ denotes a neighborhood of pixel $i$ which consists of $i$ itself as well as its upper neighbor, right neighbor, and upper-right neighbor. The symbols $d^{\text{v}}, \mn^{\text{v}}$ and $d^{\text{o}}, \mn^{\text{o}}$ denote the depth and normal for the visible and occluded regions, respectively. The weights in Eq.~\eqref{eq:GeomCons} are set to $\gcweightdepth{}=1, \gcweightnormal{}=0.1$, and the weights in Eq.~\eqref{eq:ODLoss} are set to $\woccsmooth{}=0.2,\wocccons{}=0.1,\woccdepth{}=10, \wgc{}=0.001$.

\subsubsection{Surface prediction and refinement} \label{subsubsec:surf}
With the predicted depth maps for the visible and occluded parts, we reconstruct a cubic B-spline surface that represents the underlying shape of the PQ mesh. We use a B-spline representation because its smooth shapes are suitable for freeform architectural design. Moreover, it is compatible with NURBS-based 3D modeling software such as Rhino3D, which is popular among architects. This allows the design to be easily integrated into existing architectural design workflows and further modified in a later stage.
Therefore, even though it may be possible to infer the PQ mesh without the intermediate B-spline shape, we choose to reconstruct a B-spline surface to enable interoperability with the existing design pipeline. We use 30$\times$30 B-spline control points, which are sufficient for representing a large variety of shapes.
To determine the surface, one possibility is to optimize the control points to fit the depth maps. However, this is a non-convex problem due to the unknown correspondence between the depth map and the B-spline surface, and a good fitting would require proper initialization which is not a trivial problem. 
We instead use a network module, called \emph{BSR-Net}, to predict the control points $\{\cpoint{i,j} \in \mathbb{R}^3 \mid 1 \leq i, j  \leq 30\}$
from the depth maps produced by VD-Net and OD-Net.
We choose ResNet-18~\cite{He2015} as its structure and train it using a loss function:
\begin{equation}
 \lbsr{} = \lbsrdat{} + \wbsrfair{} \lbsrfair{}.
\end{equation}
Here $\lbsrdat{}$ is a data term that measures the difference between the predicted and ground-truth B-spline surfaces:
\begin{equation}
    \lbsrdat{} = \frac{1}{|\bsrparamset|} \sum\nolimits_{(u,v) \in \bsrparamset}
    \|\bsrsurf{u,v} - \bsrsurfgt{u,v}\|^2,
\end{equation}
where $\bsrparamset{}$ is a set of 100$\times$100  parameters sampled regularly from the parameter domain of the surface, and $\bsrsurf{u,v}$, $\bsrsurfgt{u,v}$ are the surface points at the parameter $(u, v)$ using the predicted and ground-truth control points, respectively. 
$\lbsrfair{}$ is a fairness term for the predicted control net:
\begin{equation}
    \lbsrfair{} = 
    \sum\nolimits_{i=2}^{29} \sum\nolimits_{j=2}^{29}
    \| \ulaplacian{i,j} \|^2 + \| \vlaplacian{i,j} \|^2,
    \label{eq:bsrfair}
\end{equation}
where $\ulaplacian{i,j}$ and $\vlaplacian{i,j}$ are the polyline Laplacians for the two control polygons at $\cpoint{i,j}$:
\begin{equation*}
    \ulaplacian{i,j} =  \cpoint{i,j} - \frac{\cpoint{i-1,j} + \cpoint{i+1,j}}{2},~~ 
     \vlaplacian{i,j} = \cpoint{i,j} - \frac{\cpoint{i,j-1} + \cpoint{i,j+1}}{2}.
\end{equation*}
The weight $\wbsrfair{}$ is set to $0.1$. 
During the test phase, we use the predicted surface as initialization and optimize its control points to further align it with the depth maps:
\begin{equation}
    \min_{\{\cpoint{i,j}\}}
    ~~
    \frac{1}{|\dpointset{}|} \sum\nolimits_{\dpoint{i} \in \dpointset{}}
    \left(\text{dist}(\dpoint{i})\right)^2 + 
    \wsurffair{} \lbsrfair{},
\end{equation}
where $\dpointset{}$ is the set of target 3D points derived from the input depth maps, $\text{dist}(\dpoint{i})$ is the distance from a point $\dpoint{i}$ to the surface, and $\wsurffair{} = 2$. 
The problem is solved using an iterative B-spline fitting method~\cite{Farin:CAGD:2000}.
As the initial surface is close to the target points, typically a few iterations are sufficient.

\subsubsection{Conjugate direction field prediction} \label{subsubsec:cdf}
The feature lines in the sketch indicate the edge directions of the PQ mesh layout. To determine the PQ mesh, we need to first compute a dense conjugate direction field (CDF) on the surface that aligns with the feature lines. A CDF is a general cross field consisting of four directions for each point, which correspond to the edges of infinitesimally small planar quadrilaterals~\cite{Liu:2006}. As such, it is a suitable approximation of a PQ mesh layout and commonly used as initialization for PQ mesh generation~\cite{Liu:2011}. Existing methods for CDF computation need to solve a non-linear and non-convex optimization problem with a large number of variables~\cite{Liu:2011,ComplexRoots:Diamanti:2014}, which is too slow for our interactive system. Therefore, we design a network module, called \emph{CDF-Net}, to efficiently predict the CDF. Since the CDF directions are 3D vectors in the tangent plane at each point of the surface, it is sufficient to predict their 2D projections onto the canvas plane. The predicted 2D vectors can then be projected back into the corresponding tangent planes on the B-spline surface to obtain the 3D CDF directions.
To this end, our CDF-Net predicts a dense CDF for the visible regions and the occluded regions respectively.

\begin{figure}[t]
 \includegraphics[width=\linewidth]{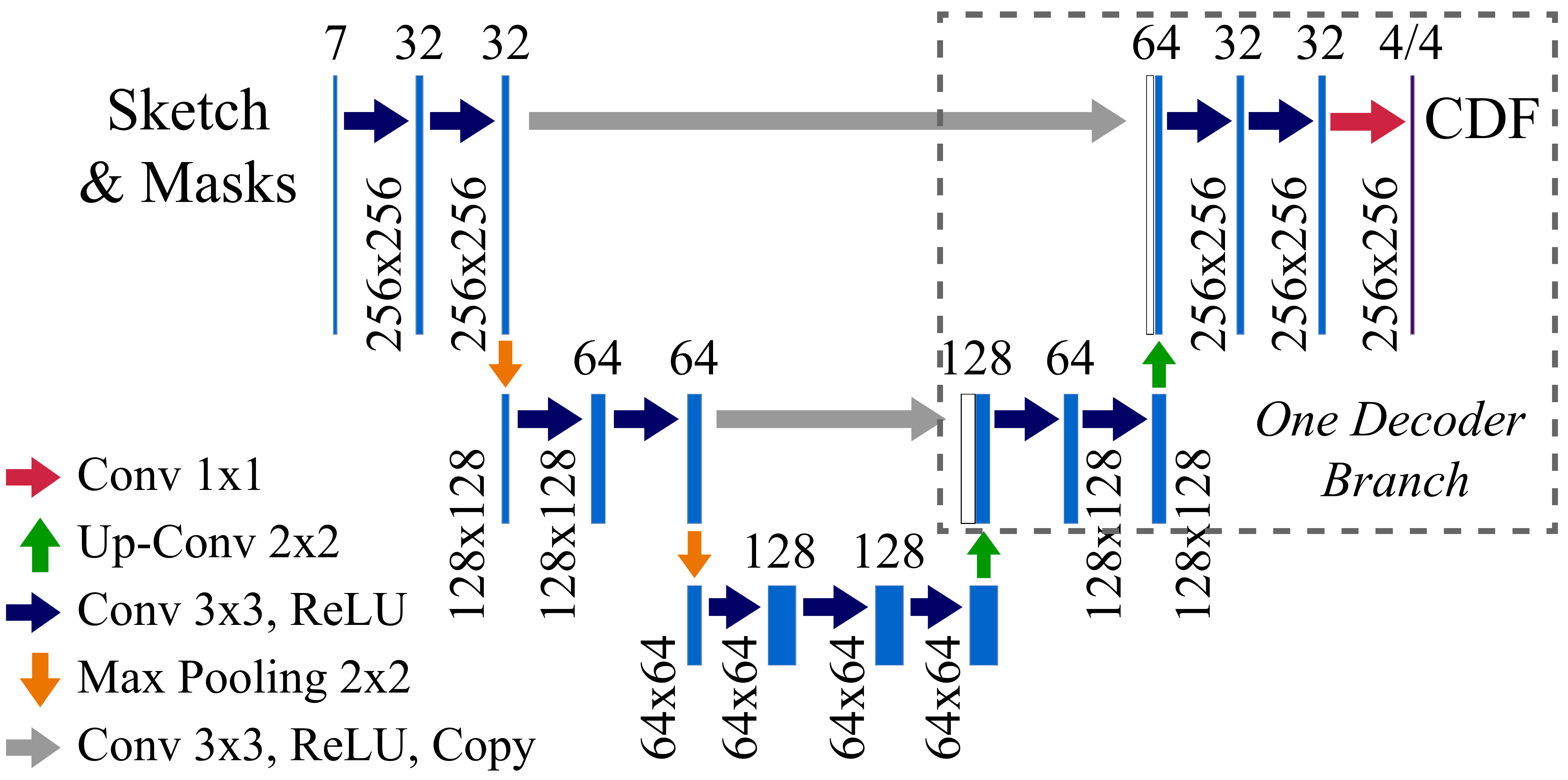}
 \caption{The CDF-Net is an encoder-decoder network with three domain resolutions. The input is a 7-channel image for the sketch and its four visibility-orientation masks. Two decoder branches share the same encoder and predict the CDF for the visible and occluded regions.}
 \label{fig:Net_field}
\end{figure}

The CDF-Net has a U-Net structure (see Fig.~\ref{fig:Net_field}), with a 7-channel input that includes the RGB sketch image and its four visibility-orientation masks. 
The output is a dense map with four channels, with a value $(c_{j,0}, c_{j,1}, c_{j,2}, c_{j,3})$ at each pixel $j$ to represent two vectors $\mathbf{u}_j = (c_{j,0}, c_{j,1})$ and $\mathbf{v}_j = (c_{j,2}, c_{j,3})$ that form the predicted projected CDF $\{\mathbf{u}_j, -\mathbf{u}_j, \mathbf{v}_j, -\mathbf{v}_j\}$. 
The network is trained to produce a prediction that is close to the ground-truth vectors $\{\mathbf{a}_j, - \mathbf{a}_j, \mathbf{b}_j, -\mathbf{b}_j \}$ at each pixel $j$. However, since each quadruplet of vectors is treated as an un-ordered set, there is an ambiguity in their correspondence when evaluating the difference between the predicted and ground-truth quadruplets. A similar ambiguity arises if we compare the predicted quadruplets at two neighboring pixels to evaluate the smoothness of the CDF.  
To avoid such issues, we follow~\cite{Li:2018:SketchCNN} and encode each quadruplet of vectors using their 4-PolyVector representation~\cite{ComplexRoots:Diamanti:2014}. Specifically, we treat the predicted vectors as complex numbers, i.e., $\compx{\mathbf{u}}_j = c_{j,0} + i c_{j,1}, \compx{\mathbf{v}}_j = c_{j,2} + i c_{j,3}$. Then the quadruplet $\{\compx{\mathbf{u}}_j, - \compx{\mathbf{u}}_j, \compx{\mathbf{v}}_j, -\compx{\mathbf{v}}_j\}$ are the roots of a complex polynomial $P(z) = (z - \compx{\mathbf{u}}_j) (z + \compx{\mathbf{u}}_j) (z - \compx{\mathbf{v}}_j) (z + \compx{\mathbf{v}}_j) = z^4 - ({\compx{\mathbf{u}}}_j^2 + {\compx{\mathbf{v}}}_j^2) z^2 + {\compx{\mathbf{u}}}_j^2 {\compx{\mathbf{v}}}_j^2$. We then identify the triplet with the polynomial coefficients ${\compx{\mathbf{u}}}_j^2 + {\compx{\mathbf{v}}}_j^2 = \compxenc{c}_{j,0} + i \compxenc{c}_{j,1}$ and ${\compx{\mathbf{u}}}_j^2 {\compx{\mathbf{v}}}_j^2 = \compxenc{c}_{j,2} + i \compxenc{c}_{j,3}$, and encode them as $\compxenc{\mathbf{c}}_j = (\compxenc{c}_{j,0}, \compxenc{c}_{j,1}, \compxenc{c}_{j,2}, \compxenc{c}_{j,3})$. The ground-truth vectors are encoded in the same way. We then evaluate the difference between two quadruplets by comparing their encoding vectors in $\mathbb{R}^4$. Based  on this encoding scheme, we train the CDF-Net with the following loss function:
\begin{equation}
\lcdf{} = \lcdfdat{} + \wcdfsmooth{} \lcdfsmooth{} + \wcdfcons{} \lcdfcons{}.
\label{eq:CDFLoss}
\end{equation}
Here $\lcdfdat{}$ is a data term that penalizes the difference between the predicted CDF and the ground-truth:
\begin{equation*}
  \lcdfdat{} = 
  \frac{1}{|\vispixset{}|} \sum_{j\in \vispixset{}}\| \compxenc{\mathbf{c}}_{j}^\text{v} - \compxenc{\mathbf{c}}_{j}^\text{v,gt} \|^2
  + 
  \frac{1}{|\occpixset{}|} \sum_{j\in \occpixset{}}\| \compxenc{\mathbf{c}}_{j}^\text{o} - \compxenc{\mathbf{c}}_{j}^\text{o,gt} \|^2
\end{equation*}
where $\compxenc{\mathbf{c}}_{j}^\text{v}$ and $\compxenc{\mathbf{c}}_{j}^\text{o}$ are the CDF encoding vectors at pixel $j$ for the visible and occluded region respectively, and  $\compxenc{\mathbf{c}}_{j}^\text{v,gt}, \compxenc{\mathbf{c}}_{j}^\text{o,gt}$ are the ground-truth encoding vectors. The term $\lcdfsmooth{}$ evaluates the smoothness of the predicted CDF by comparing its encoding vectors at neighboring pixels:
\begin{equation*}
\lcdfsmooth{} = \frac{1}{|\visnbset|} \sum_{(j,k) \in \visnbset{}} \| \compxenc{\mathbf{c}}_{j}^\text{v} - \compxenc{\mathbf{c}}_{k}^\text{v} \|^2
+ 
\frac{1}{|\occnbset|} \sum_{(j,k) \in \occnbset{}} \| \compxenc{\mathbf{c}}_{j}^\text{o} - \compxenc{\mathbf{c}}_{k}^\text{o} \|^2,
\end{equation*}
where $\visnbset{}$ and $\occnbset{}$ are the sets of neighboring pixel pairs in the visible and occluded regions, respectively.
The term $\lcdfcons{}$ enforces consistency of the CDF across the contour line, similar to the term $\lgc{}$ as in~\eqref{eq:GeomCons}:
\begin{equation}
\lcdfcons{} = \frac{1}{|\Omega|}\sum\nolimits_{ (j,k) \in \Omega } (\|\compxenc{\mathbf{c}}_{j}^\text{o} - \compxenc{\mathbf{c}}_{k}^\text{v} \|^2
+ \|\compxenc{\mathbf{c}}_{j}^\text{v} - \compxenc{\mathbf{c}}_{k}^\text{o} \|^2),
\end{equation}
where the set $\Omega$ is the same as in Eq.~\eqref{eq:GeomCons}.
The weights in Eq.~\eqref{eq:CDFLoss} are set to $\wcdfsmooth{}=0.1$ and $\wcdfcons{}=0.01$.

\subsubsection{PQ mesh generation} \label{subsubsec:pq}
Once the projected CDF and the B-spline surface patch are obtained, we are ready to generate a planar quadrilateral mesh that follows the CDF as much as possible. We first discretize the B-spline surface into a triangle mesh according to pixel resolution via depth buffer rasterization, and project the CDF along the view direction onto the mesh. Then we utilize a field-based quad meshing algorithm~\cite{Liu:2011} 
to extract a quadrilateral mesh that is approximately planar, and further optimize the face planarity using the method from~\cite{ShapeOp_2015}.
Since the CDF already provides a good approximation of the edge directions in a PQ mesh, the final PQ mesh typically aligns well with the CDF and the sketched feature lines (see Fig.~\ref{fig:Sketch_CDF}).

\begin{figure}[t]
 \includegraphics[width=\linewidth]{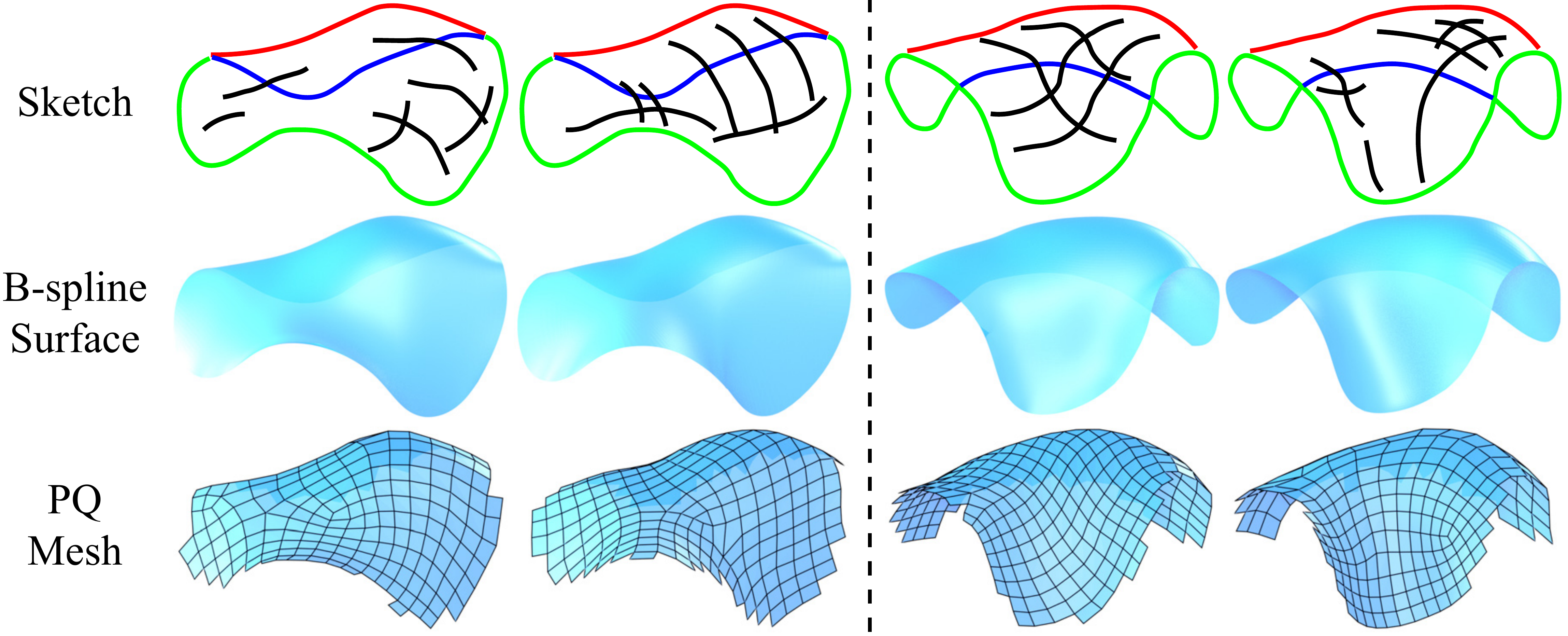}
 \caption{Feature lines can control the PQ mesh layout. Here we show two pairs of sketches, each having the same structural lines but different feature lines. The PQ mesh edges align well with the feature lines.}
 \label{fig:Sketch_CDF}
\end{figure}
\section{Data Generation and Network Training}
\label{sec:data_training}

The training data of our network needs to contain the ground-truth B-spline surfaces, their CDFs, and compatible sketch images. 
To the best of our knowledge, there is no large-scale 3D model dataset for freeform architectural shapes. Therefore, we build our own  dataset for training and testing.

\myparagraph{Shape creation} \label{subsec:3d_shape_generate}
Since our system is aimed at roof-like structures, we assume the sketched shape to be a height-field surface $z = \heightfield{x,y}$. Therefore, we create each ground-truth surface by first randomly generating a height-field B-Spline surface, and then applying a random 3D deformation to introduce variations. Specifically, we generate the control points $\{\cpoint{i, j}$ ($1 \leq i,j \leq 30$)$\}$ in the following steps.

First, we generate the $(x,y)$-coordinates $\cpxycoord{i,j}$ for each control point. 
We first compute the candidate coordinates for the boundary control points (i.e., where at least one of $i$ and $j$ is in $\{1, 30\}$) using polar coordinates as $\candxycoord{i,j} = (\rho_{i,j} \cos \theta_{i,j}, \rho_{i,j} \sin \theta_{i,j})$, where the radial parameters $\{\rho_{i,j}\}$ are randomly sampled from $[2, 10]$, and the angle parameters $\{\theta_{i,j}\}$ are randomly sampled from $[0, 2 \pi)$ and
sorted in the same order as the control points along the boundary. 
We connect 
these 2D points according 
to the boundary connectivity 
\begin{wrapfigure}{r}{0.49\columnwidth}
	\centering
	\vspace*{-2ex}
	\includegraphics[width=0.49\columnwidth]{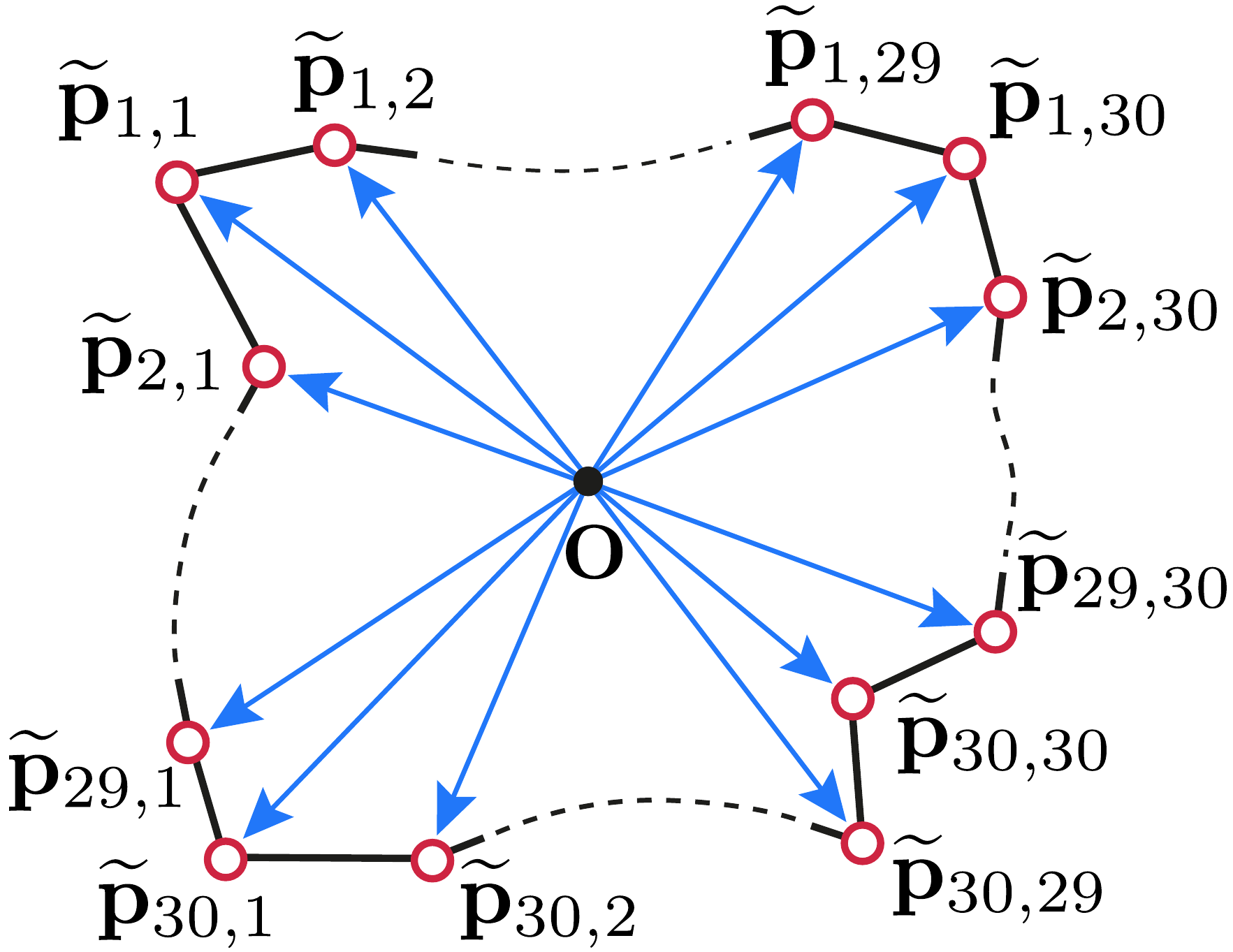}\\
	\vspace*{-1.5ex}
\end{wrapfigure}
of their corresponding control points 
to form a boundary polygon (see inset). To achieve a height field surface, we would like the $(x,y)$-coordinates of the remaining control points to be inside this polygon. To this end, we evenly sample 10000 points $\{\xysample{k}\}$ inside the polygon, and optimize all $\cpxycoord{i,j}$:
\begin{equation*}
\begin{aligned}
 \min\limits_{\{\cpxycoord{i,j}\}}~~ & \sum\nolimits_{(i,j) \in \boundarysamples{}}\|\cpxycoord{i, j} - \candxycoord{i,j}\|_2^2 + \wsampletocp{} \sum\nolimits_{k} \| \xysample{k} - \cpxycoord{\closestcpidx{k}} \|_2^2 \\
& ~~
+ \wcptosample{} \sum\nolimits_{(i,j) \in \interiorsamples{}} \| \cpxycoord{i,j} - \xysample{\closestsampleidx{i,j}}
\|_2^2
+ \wcpxysmooth{} \lcpxysmooth{}. 
\end{aligned}
\end{equation*}
Here $\boundarysamples{}$ and $\interiorsamples{}$ are the index sets of the boundary and non-boundary control points, respectively. The first term above requires the boundary coordinates to be close to their candidate values. The second term penalizes the distance from each sample point $\xysample{k}$ to the $(x,y)$-coordinates of its closest non-boundary control point $\cpoint{\closestcpidx{k}}$. The third term aligns the non-boundary control point coordinates $\cpxycoord{i,j}$ to its closest sample point $\xysample{\closestsampleidx{i,j}}$. 
$\lcpxysmooth{}$ is a fairness term for the coordinates $\{\cpxycoord{i,j}\}$, similar to $\lbsrfair$ in Eq.~\eqref{eq:bsrfair}.
The weights are chosen as $\wsampletocp{} = \wcptosample{} = 1$ and $\wcpxysmooth{} = 10$. We apply a global translation and scaling to all the optimized $(x,y)$-coordinates such that their bounding box is within $[-4, 4]^2$.

Afterwards, we generate the $z$-coordinates of the control points, to achieve a smooth appearance consistent with the style of freeform architecture. Let $\mathbf{z} \in \mathbb{R}^{900}$ be a vector that concatenates all the $z$-coordinates. We construct a matrix $\mathbf{L}$ such that the value $\zfair{\mathbf{z}} = \mathbf{z}^T \mathbf{L}^T \mathbf{L}\mathbf{z}$ is a Laplacian fairness measure of the $z$-coordinates similar to the term $\lbsrfair$ in Eq.~\eqref{eq:bsrfair}. The desirable $z$-coordinates should achieve a small value of $\zfair{\mathbf{z}}$. Therefore, we perform eigendecomposition on the matrix $\mathbf{L}^T \mathbf{L}$ and select the 30 eigenvectors with the smallest eigenvalues. We then linearly combine these vectors using random coefficients from $[-1, 1]$, and scale the resulting vector such that all its components are in $[-5, 5]$. These components are used as the $z$-coordinates.

Finally, we transform the control points with a random freeform deformation (FFD)~\cite{Sederberg:1986:ffd}, which is defined using tricubic B-spline bases with 16$\times$16$\times$16 uniform control points over the domain $[-5, 5]^3$. To induce the deformation, the $x$- and $y$-coordinates of each FFD control point are displaced by a random 2D vector from $[-0.2, 0.2]^2$. Furthermore, to ensure smoothness of the final result, we only retain the surface if the maximum magnitude of its Gaussian curvature is no larger than a  threshold (chosen to be 5 in our experiments). 

The above steps can generate shapes with a wide range of variations, but may not provide good coverage of regular shapes such as ellipsoidal patches. Therefore, we further extract height field patches from random ellipsoids and random cuboids by cutting the shapes with random planes, and fit B-spline surfaces to such patches as part of our dataset.
Fig.~\ref{fig:3d_datasets_examples} shows some examples of B-spline surfaces from our dataset.
More examples are shown in Appendix~\ref{appx:More3dData}.

\begin{figure}[t]
 \includegraphics[width=\linewidth]{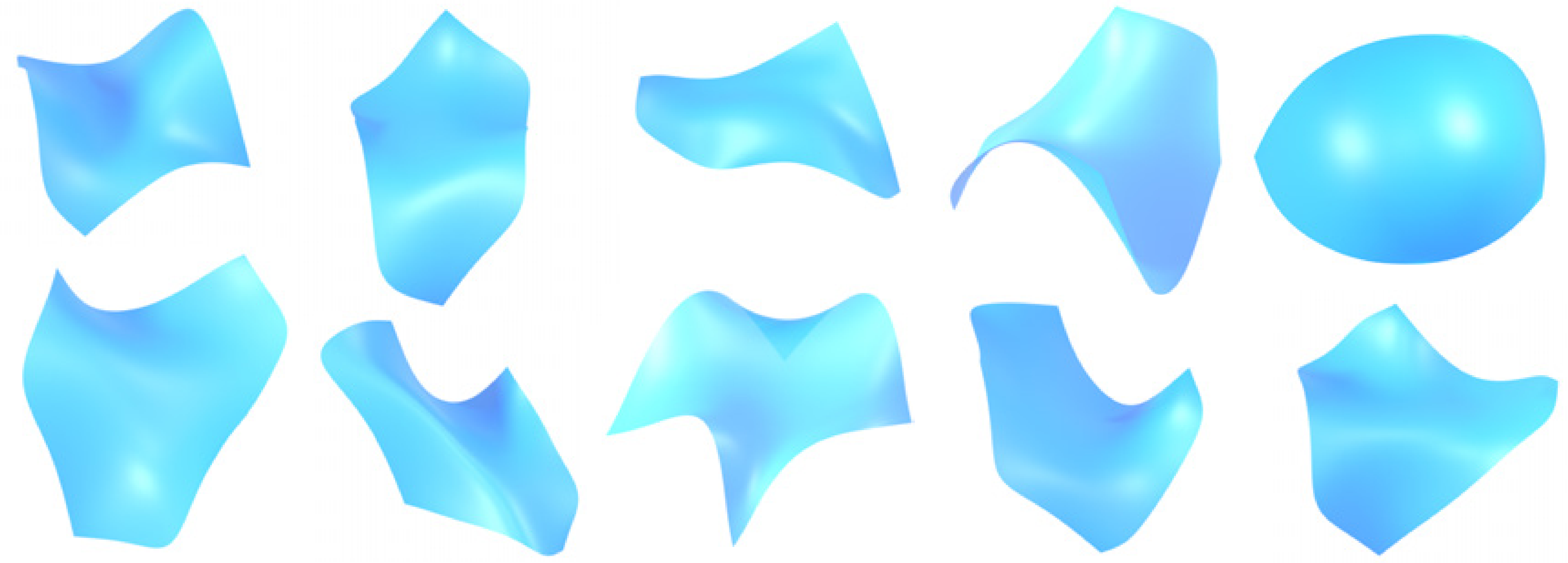}
 \caption{Examples of B-spline surfaces generated using the method described in Sec.~\ref{sec:data_training}.}
 \label{fig:3d_datasets_examples}
 \end{figure}

\myparagraph{Sketch generation} 
To generate an axonometric view of a surface, we first place the camera at $(R/\sqrt{2}, R/\sqrt{2}, R)$ and point it towards the origin, with $R$ being a sufficiently large positive value so that the whole surface is contained in the view volume. Then we rotate the surface with respect to the $z$-axis by an angle $k \cdot \pi/30$ ($k=1, \ldots, 60$) to produce 60 rotated surfaces, and perform orthographic projection on each of them onto the viewing plane to generate the sketched structural lines (i.e., the visible/invisible boundaries and the contour lines) and the visibility/orientation masks. The visibility is determined using the Z-buffer, and the contour lines are determined using the method from~\cite{DeCarlo:2003:contours}. We only retain the sketches that satisfy Assumption~\ref{assump:SurfaceConditions}. For these sketches, we further generate features lines that indicate the PQ mesh layout. To this end, we first specify random conjugate directions at a few random locations on the triangulated surface, and compute a matching dense CDF using the method of~\cite{ComplexRoots:Diamanti:2014}. 
A quad mesh that follows the CDF is then extracted using the method from~\cite{Liu:2011}. Afterwards, we randomly select and trace some mesh edges from the visible region and project them to the viewing plane to obtain the feature lines.
All the structural lines and feature lines are rendered as mentioned in Sec.~\ref{subsec:ui}.
We also randomly select up to six depth samples.  Moreover, to achieve more effective learning, we relabel the B-spline control points such that the corner point $\cpoint{1,1}$ is located in the upper left region of the sketch.
In total, we create 270k data pairs and 54k data pairs for the training and test sets, respectively. Among them, 80k training pairs and 16k test pairs contain occluded regions.

\myparagraph{Network training}
We implement the networks using PyTorch~\cite{NEURIPS2019_Pytorch}.
We first train VD-Net, OD-Net, and BSR-Net in a sequential way.
The VD-Net is first trained for 30 epochs and fixed. Then the OD-Net is trained for 20 epochs and fixed. Afterwards, the BSR-Net is trained for 50 epochs. The CDF-Net is trained separately for 30 epochs. 
Each network is trained on a single Nvidia V100 GPU, using the Adam optimizer~\cite{AdamSolver} with a start leaning rate of $10^{-4}$ and a batch size of 48. The training of VD-Net and OD-Net takes about 16 hours each, and the training of BSR-Net and CDF-Net takes about 10 hours and 8 hours respectively. Finally, VD-Net, OD-Net and BSR-Net are trained together for 20 epochs, with the learning rate for VD-Net and OD-Net reduced to $10^{-5}$.

\section{Experiments and User Studies} \label{sec:results}

\begin{figure}[t]
\centering
 \includegraphics[width=0.85\linewidth]{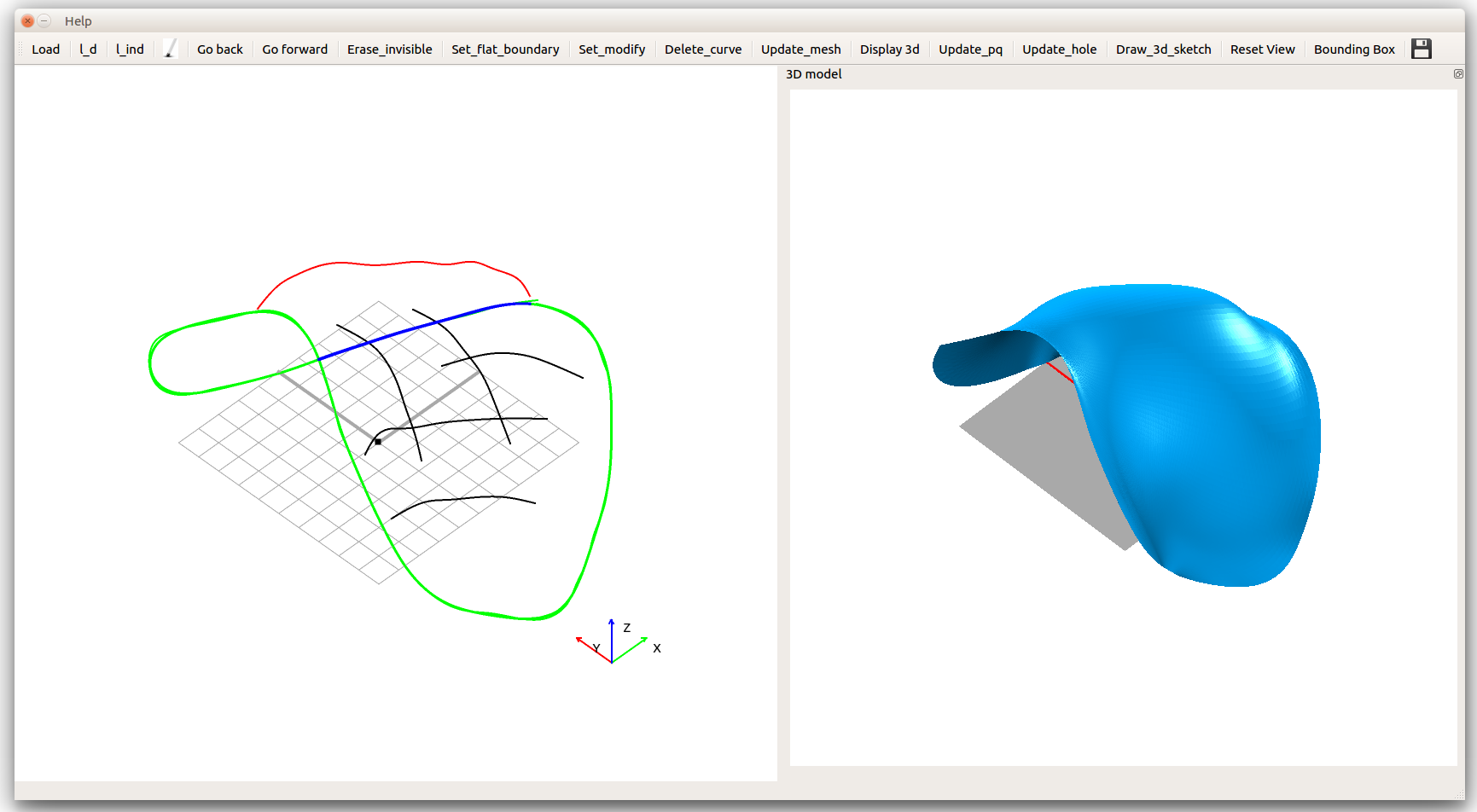}
 \caption{The GUI of our system. The drawing canvas is on the left, and the modeling result is updated in real-time on the right.}
 \label{fig:gui_systerm}
\end{figure}

\begin{figure*}[t]
 \includegraphics[width=\linewidth]{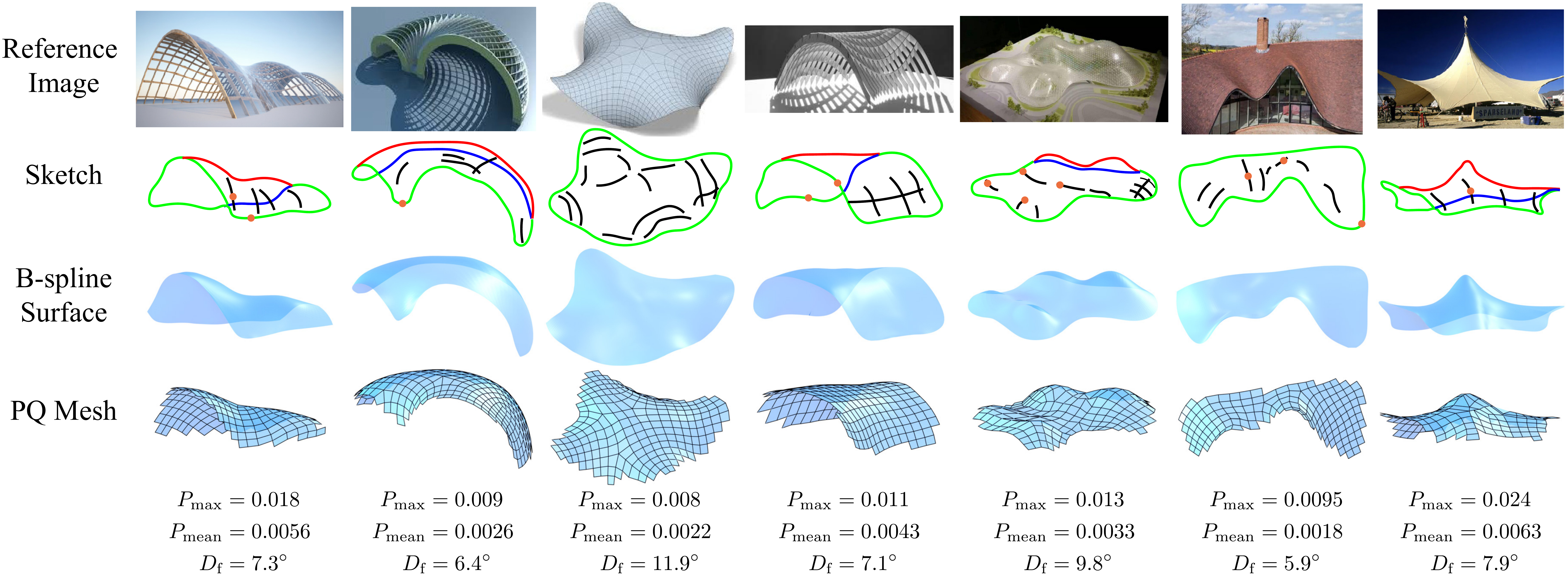}
 \caption{Results on sketches drawn using freeform architecture images as reference.
 For each generated PQ mesh, the max and mean planarity errors $\maxplanarity{},\meanplanarity{}$, and the feature alignment error $\pqedgedist{}$ (Eq.~\eqref{eq:FLalignment}) are also shown.}
 \label{fig:results1}
\end{figure*}

We design a series of experiments and user studies to validate the effectiveness of our system. For intuitive interaction, the UI of our system includes a canvas for sketching and a 3D rendering pane for visualization of the result (see Fig.~\ref{fig:gui_systerm}). In the canvas, the user can select the stroke type and draw the lines using a mouse or a stylus pen. The 3D rendering pane is updated in real-time to show the resulting B-spline surface, and the user can change the view direction to inspect the 3D result. Depth sample points can be specified on either the canvas view or the 3D view, and their depth values can be modified using the mouse scroll wheel.
When the user is satisfied with the shape, they can use a meshing button to generate the PQ mesh.
Examples of user interaction are shown in the accompanying video\footnote{\url{https://www.youtube.com/watch?v=J-YYgaUd1hY}}. 
The system is tested on a PC with an Intel Core i5-8600K CPU at 3.6GHz, an Nvidia GeForce GTX 1060, and 16GB RAM. On average, the inference takes about $40$ms for VD-Net, OD-Net and BSR-Net, and about $28$ms for CDF-Net. The B-spline surface refinement takes about $100$ ms. The PQ mesh extraction takes about $3.8$s; since it only needs to be executed occasionally, this does not affect the interactivity of the system.

\subsection{Evaluation}
We evaluate the quality of the results generated by our system from two aspects. First, we measure the difference between the recovered B-spline surface and the ground-truth surface using their Chamfer distance based on point positions and normals, denoted as $\chamferdist{}$ and $\normaldist{}$ respectively:
\begin{align}
    \chamferdist{} & = 
    \frac{\sum_{\recsurfsample{i} \in \recsurfsampleset{}} \| \recsurfsample{i} - \gtsurfsample{\gtsurfcpidx{i}} \|^2
    + \sum_{\gtsurfsample{j} \in \gtsurfsampleset{}} \| \gtsurfsample{j} - \recsurfsample{\recsurfcpidx{j}} \|^2}{|\recsurfsampleset{}| + |\gtsurfsampleset{}|},
    \label{eq:ChamferDist}\\
    \normaldist{} & = 
    \frac{1}{|\recsurfsampleset{}| + |\gtsurfsampleset{}|}[\sum\nolimits_{\recsurfsample{i} \in \recsurfsampleset{}} \anglefunc(N(\recsurfsample{i}), N(\gtsurfsample{\gtsurfcpidx{i}}))
    \nonumber\\
    &\qquad\qquad\qquad~~ + \sum\nolimits_{\gtsurfsample{j} \in \gtsurfsampleset{}} \anglefunc(N(\gtsurfsample{j}), N(\recsurfsample{\recsurfcpidx{j}}))].
    \label{eq:NormalDist}
\end{align}
Here $\recsurfsampleset{}, \gtsurfsampleset{}$ are the sample point sets for the recovered surface and the ground-truth surface respectively,  $\gtsurfsample{\gtsurfcpidx{i}}$ is the closest point to $\recsurfsample{i}$ on the ground-truth surface, $\recsurfsample{\recsurfcpidx{j}}$ is the closest point to $\gtsurfsample{j}$ on the recovered surface, $N(\cdot)$ denotes the normal vector at a point, and $\anglefunc$ is the angle between two vectors: 
\begin{equation}
\anglefunc(\vecsymbol_1, \vecsymbol_2)
= \arccos \left( \left|\vecsymbol_1 \cdot \vecsymbol_2\right|/({\|\vecsymbol_1\|\cdot\|\vecsymbol_2\|}) \right).
\label{eq:AngleFunc}
\end{equation}
As there is no public sketch dataset for PQ meshes, the evaluation is performed using the test dataset created in Sec.~\ref{sec:data_training}.
Table~\ref{tab:evaluation} shows the average values of $\chamferdist{}$ and $\normaldist{}$ using our method on the test dataset. 
Since the sketches are represented as RGB images, for comparison we also include results from alternative learning-based methods~\cite{atlasnet2018,Deng:2020:3DV,smirnov2021patches} that can generate a disk-topology surface from a single sketch image.
Among them, AtlasNet~\cite{atlasnet2018} represents the generated shape as a collection of parametric surface patches, and uses a loss function that penalizes the chamfer distance between sample point sets from the generated surface and the ground-truth surface. 
We test it using  one, two and five patches, respectively. We also test a variant that uses a single patch and with additional loss terms to align the boundaries of the generated and ground-truth surfaces (see Appendix~\ref{appx:SurfPredictionComparison}).
\cite{Deng:2020:3DV} uses a similar representation as AtlasNet, with additional loss terms to enforce consistency between the patches.
We test it using one and four patches, respectively. 
\cite{smirnov2021patches} represents the shape with a template consisting of Coons patches.
We test it using 10$\times$10 bi-cubic Coons patches.
All three methods are re-trained using our training dataset (see Appendix~\ref{exp:training_details}). 
Table~\ref{tab:evaluation} shows their average values of $\chamferdist{}$ and $\normaldist{}$ on our test dataset.
Appendix~\ref{appx:SurfPredictionComparison} further shows two example sketches from the test dataset and the resulting surfaces using each method.
These results show that our method is effective in recovering the underlying B-spline surface shape, and produces more accurate results than the alternative approaches. 

In addition, we evaluate the quality of the resulting PQ mesh with two metrics. 
We measure the planarity error of each face using the distance between its two diagonal lines divided by the average edge length of the mesh, and calculate the maximum planarity error $\maxplanarity{}$ and the mean planarity error $\meanplanarity{}$ across the whole mesh.
We also measure the alignment between the PQ mesh edge directions and the feature lines from the input sketch. 
To do so, we first project the visible PQ mesh edges onto the viewing plane. Then we densely sample a set of points $\flsampleset{} = \{\flsample{i}\}$ from the feature lines, and compute an angle-based alignment error:
\begin{equation}
    \pqedgedist{} = \sum\nolimits_{\flsample{i} \in \flsampleset{}} \anglefunc(T(\flsample{i}), H(\edgecp{i})), 
    \label{eq:FLalignment}
\end{equation}
where $\edgecp{i}$ is the closest point on the projected edges from the sample point $\flsample{i}$, $T(\cdot)$ and $H(\cdot)$ denote the feature line tangent direction and the projected edge direction respectively, and $\anglefunc{}$ is the angle function defined in Eq.~\eqref{eq:AngleFunc}.
On the test dataset, our method achieves the the average error metrics of 
$\avgmaxplanarity{} = 0.0169$, $\avgmeanplanarity{} = 0.0042$ and 
$\avgpqedgedist{} = 7.8^{\circ}$. It shows that our method can produce PQ meshes that align well with the input feature lines and with low planarity errors.

\begin{figure*}[t]
 \includegraphics[width=\linewidth]{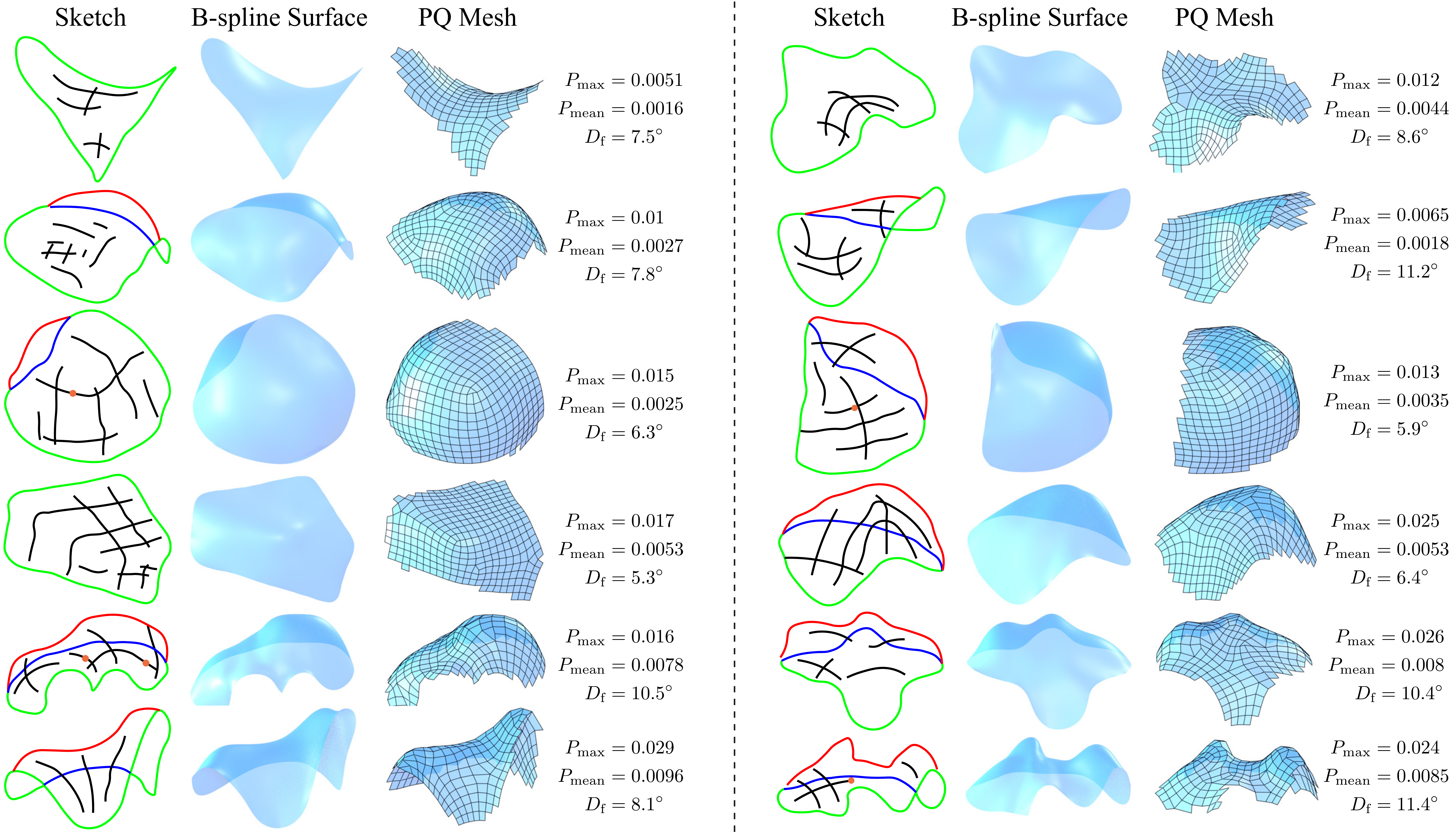}
 \caption{Results on sketches drawn without reference images.
 For each generated PQ mesh, the max and mean planarity errors $\maxplanarity{},\meanplanarity{}$, and the feature alignment error $\pqedgedist{}$ (Eq.~\eqref{eq:FLalignment}) are also shown.}
 \label{fig:results2}
\end{figure*}

\renewcommand\arraystretch{1.4}
\begin{table}[t]
 \caption{Average values of $\chamferdist{}$ and $\normaldist{}$ (Eqs.~\eqref{eq:ChamferDist} \& \eqref{eq:NormalDist}) on our test dataset using different methods to predict the underlying surface. For~\cite{atlasnet2018} and \cite{Deng:2020:3DV}, the numbers in the sub-column headings indicate the number of patches.}
 \centering
\setlength{\tabcolsep}{2pt}
\begin{tabular}{c|c|c|c|c|c|c|c|c}
\hline
\multirow{2}{*}{}&
\multicolumn{4}{c|}{AtlasNet~\cite{atlasnet2018}}&
\multicolumn{2}{c|}{\multirow{1}{*}{\cite{Deng:2020:3DV}}} &
\multirow{2}{*}{\cite{smirnov2021patches}} &
\multirow{2}{*}{Ours}\\
\cline{2-7}
   & 1&2 & 5 & variant & 1& 4&\multicolumn{1}{c|}{}& \\
\hline
 $\chamferdist{}$ ($\times 10^{-2}$)& 4.69& 4.01 & 3.42 &3.83  & 6.38 & 9.45 & 7.53 & \textbf{0.98}\\
\hline
 $\normaldist{}$ & ~27.5$^{\circ}$& ~22.5$^{\circ}$ & ~21.6$^{\circ}$ & ~18.7$^{\circ}$ & ~18.6$^\circ$& ~21.5$^\circ$ & ~15.8$^{\circ}$ & ~\textbf{7.23}$^{\boldsymbol{\circ}}$\\
\hline
\end{tabular}
 \label{tab:evaluation}
\end{table}

Figs.~\ref{fig:results1} and  \ref{fig:results2} show some examples of sketches outside our dataset, and their corresponding B-spline surfaces and PQ meshes generated with our system.
The sketches in Fig.~\ref{fig:results1} are drawn according to rendered images of architectural shapes or real architectural photos, while the ones in Fig.~\ref{fig:results2} are drawn without reference shapes. 
In both figures, the resulting shapes align well with the structural lines and achieve the correct occlusion relation conveyed by the sketch.
The figures also include the error metrics $\maxplanarity$, $\meanplanarity{}$ and $\pqedgedist{}$ for the PQ meshes, which show that the meshes have low planarity errors and align well with the feature lines.

\begin{figure}[t]
 \includegraphics[width=\linewidth]{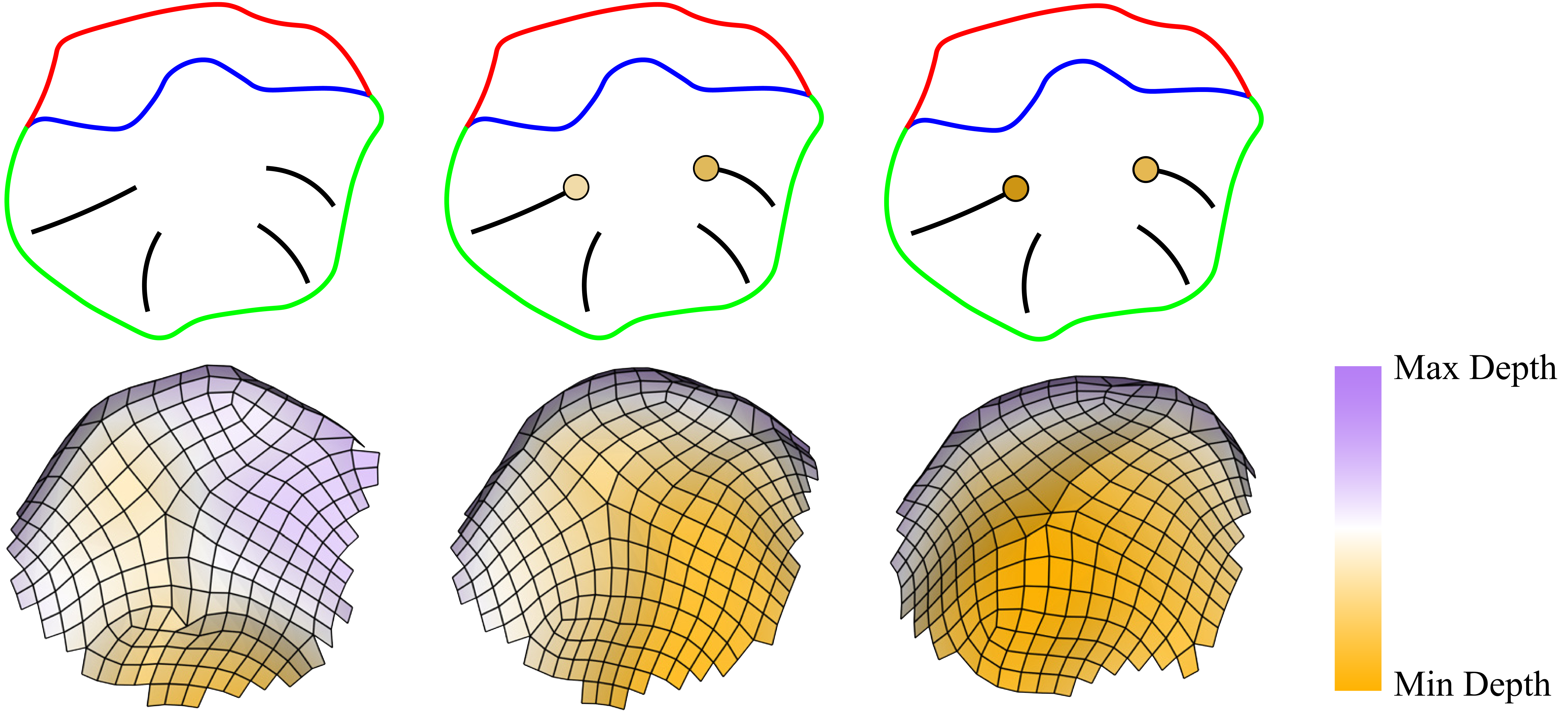}
 \caption{Three sketches with the same strokes but different numbers of depth samples, and the resulting PQ meshes. Both the depth sample values and the PQ mesh depth values are visualized using color coding.}
 \label{fig:DepthSampleControl}
\end{figure}

In our system, the feature lines and depth samples enable the user to fine-tune the surface shape and the PQ mesh layout. Figs.~\ref{fig:DepthSampleControl} and \ref{fig:FeatureLineControl} provide examples of such controls. Fig.~\ref{fig:DepthSampleControl} shows three inputs with the same sketch but different depth samples: the first one has no depth sample, while the other two have depth samples at the same locations but with different depth values. We can see that the depth values of the resulting PQ meshes are consistent with the specified values at the depth sample locations, which shows the effectiveness of depth samples for shape control. 
Fig.~\ref{fig:field_sketch_field} shows a series of sketches with the same structural lines and incrementally more feature lines. We can see that each additional feature line influences the PQ mesh shape and the mesh edge layout in its surrounding area, providing an intuitive tool to control the PQ mesh.

\subsection{Ablation Study}

\myparagraph{Depth prediction} Our system predicts the depth and normal maps using two separate network modules for the visible regions and the occluded regions, respectively. We test an alternative network that predicts all the depth and normal maps jointly with a U-Net structure: the input sketch, depth samples and all the masks are fed to an encoder with the same architecture as the ones used in VD-Net and OD-Net; there are two decoder branches that generate the  visible and occluded depth maps separately, each with the same structure as the decoder in VD-Net and OD-Net, respectively.
In addition, to validate the necessity of the orientation information in the masks, we also test an alternative approach that replaces the four input visibility-orientation masks to VD-Net and OD-Net with two masks that indicate the visible and occluded regions but not their orientations.
We compare the depth maps generated by our method and the alternative approaches on the test dataset. For each method, we measure the accuracy of the predicted visible depth values $\{d_i^{\text{v}} \mid i \in \vispixset{}\}$ and occluded depth values $\{d_j^{\text{o}} \mid j \in  \occpixset{}\}$ using their mean squared errors compared to the respective ground truth values $\{d_i^{\text{v,gt}}\}$ and $\{d_i^{\text{o,gt}}\}$:
\begin{figure}[t]
	\includegraphics[width=\linewidth]{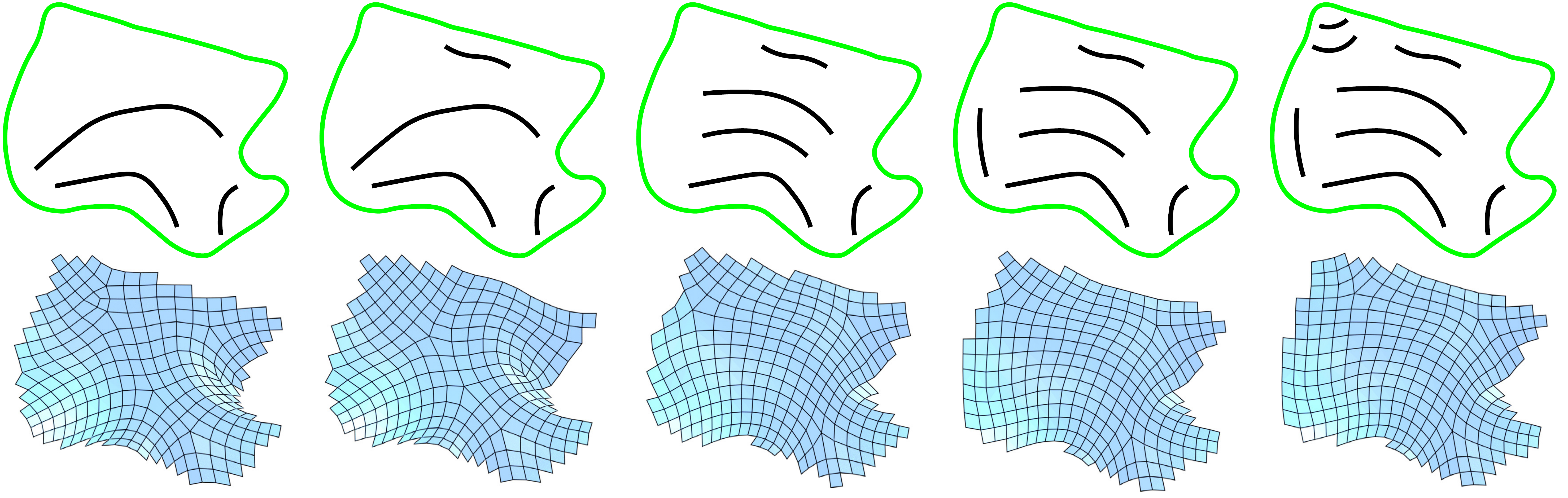}
	\caption{An example where the resulting PQ mesh layout is fine-tuned by incrementally adding feature lines.}
	\label{fig:FeatureLineControl}
\end{figure}
\renewcommand\arraystretch{1.4}
\begin{table}[t!]
 \caption{Average values of depth error metrics $\visdepthmse{}, \occdepthmse{}$ (Eq.~\eqref{eq:DepthMSE}) on the test dataset using our method and two alternative methods: joint inference of visible and occluded depth maps (``Joint''), and using visible and occluded masks without orientation information (``Two-Masks'').}
 \centering
 \begin{tabular}{cccc}
 \hline
  & Ours & Joint & Two-Masks \\
 \hline
 $\visdepthmse{}$ ($\times 10^{-2}$) &  \textbf{0.6} & 0.84 & 1.6\\
 \hline
 $\occdepthmse{}$ ($\times 10^{-2}$) & \textbf{0.12} & 0.64 & 1.2\\
 \hline
 \end{tabular}
 \label{tab:multistep}
\end{table}
\begin{table}[t!]
 \caption{
Average values of $\chamferdist{}, \normaldist{}$ (Eqs.~\eqref{eq:ChamferDist} \& \eqref{eq:NormalDist}) on the test dataset using our B-spline surface prediction method and two alternative methods: direct prediction of control points using a ResNet, and using the BSR-Net alone without further optimization.}
 \centering
 \begin{tabular}{cccc}
 \hline
 & Ours & ResNet& BSR-Net Only \\
 \hline
 $\chamferdist{}$ ($\times 10^{-2}$) & \textbf{0.98} & 12.7 & 1.98 \\
 \hline
 $\normaldist{}$ ($\times 10^{-2}$) & ~~\textbf{7.23}$^{\boldsymbol{\circ}}$ & ~~20.2$^{\circ}$ &  ~~12.3$^{\circ}$ \\
 \hline
 \end{tabular}
 \label{tab:AblationBSR}
\end{table}
\begin{table}[t!]
 \caption{Average values of CDF smoothness ($\cdfsmoenergy{}$), max/mean PQ mesh planarity errors ($\maxplanarity{}$/$\meanplanarity{}$), and feature line alignment error ($\pqedgedist{}$) on the test set, with and without the term $\lcdfcons$ in the CDF-Net loss function.}
 \centering
 \begin{tabular}{ccccc}
 \hline
 & $\cdfsmoenergy{}$ & $\maxplanarity{}$ & $\meanplanarity{}$ &  $\pqedgedist{}$ \\
 \hline
 With $\lcdfcons$ (Ours)& \textbf{637} & \textbf{0.0169} & \textbf{0.0042}  &  \textbf{7.8}$^{\boldsymbol{\circ}}$\\
 \hline
 Without $\lcdfcons$& 775& 0.0174 & 0.0043 &  8.9$^{\circ}$\\
 \hline
 \end{tabular}
 \label{tab:consistent_field}
\end{table}
\begin{equation}
\visdepthmse{} = \frac{\sum_{i\in \vispixset{}} (d_i^{\text{v}} - d_i^{\text{v,gt}})^2}{| \vispixset{} |},~~ 
\occdepthmse{} = \frac{\sum_{j\in \occpixset{}} (d_j^{\text{o}} - d_j^{\text{o,gt}})^2}{ | \occpixset{} | }.
\label{eq:DepthMSE}
\end{equation}
Table~\ref{tab:multistep} shows the average values of $\visdepthmse{}$ and $\occdepthmse{}$ on the test dataset for each method. It verifies that our method generates more accurate depth maps than the alternative approaches.
Our approach first handles the visible depth map, which is typically an easier problem than the occluded depth map due to its larger area as well as the feature lines that provide geometric information. Afterwards, the predicted visible depth map provides more cues for predicting the occluded depth map. We note that similar approaches have been adopted in existing 3D reconstruction methods such as~\cite{Wu:marrnet2017,Wu:2018:shaped}.
In addition, the orientation information is an important cue for the surface shape and its relation with the viewing plane. Hence the inclusion of orientation in the input masks for OD-Net and VD-Net improves the prediction accuracy.

\firstparagraph{B-spline surface prediction} 
Our system first uses BSR-Net to infer an initial B-spline surface and further optimizes it  to align with the predicted depth maps. To validate the necessity of this setup, we test two alternative approaches: (1)~using BSR-Net to infer the B-spline control points but without the subsequent optimization; (2)~using a ResNet-18~\cite{He2015} to directly infer the control points from the sketch without using the predicted depth maps. 
Table~\ref{tab:AblationBSR} shows the average values of the B-spline surface error metrics $\chamferdist{}, \normaldist{}$ (see Eqs.~\eqref{eq:ChamferDist} and \eqref{eq:NormalDist}) on the test dataset using each method. 
Both alternative approaches lead to less accurate results than our method. 
Fig.~\ref{fig:BSplineComparison} shows some sketches from the test dataset and their resulting surfaces using each approach. We can see that direct inference with ResNet may produce surfaces that deviate from the structural lines. The BSR-Net, with the help of the predicted depth maps, can better align the surfaces with the structural lines than ResNet, although there can still be deviations in some local regions. The optimization after BSR-Net further reduces the local deviation and produces a result that aligns well with the structural lines.

\begin{figure}[t]
 \includegraphics[width=\linewidth]{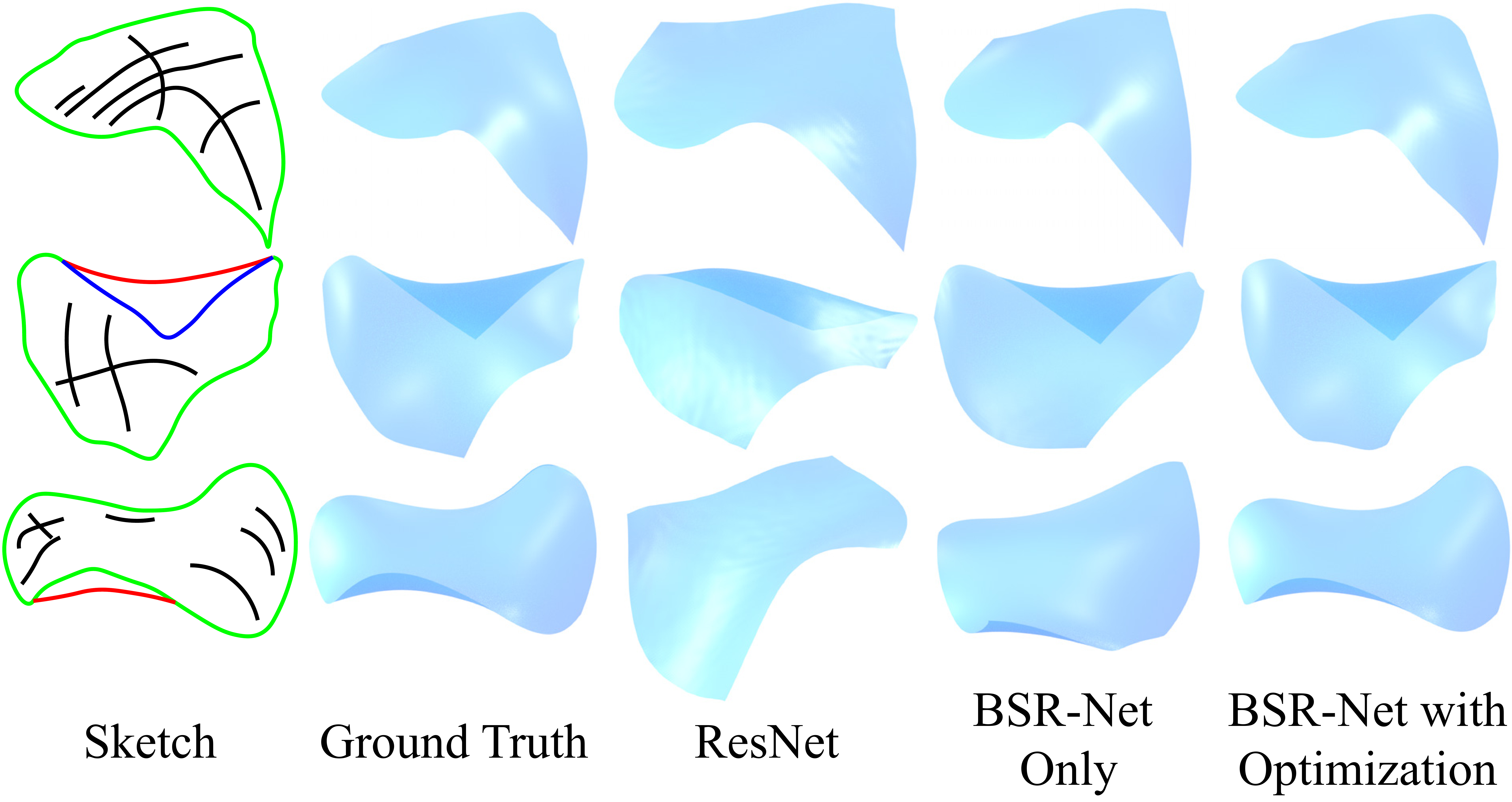}
 \caption{Comparison of the predicted B-spline surfaces using a ResNet, the BSR-Net alone, and the BSR-Net with further optimization. BSR-Net with optimization produces more accurate results.}
 \label{fig:BSplineComparison}
\end{figure}

\begin{figure}[t]
\includegraphics[width=\linewidth]{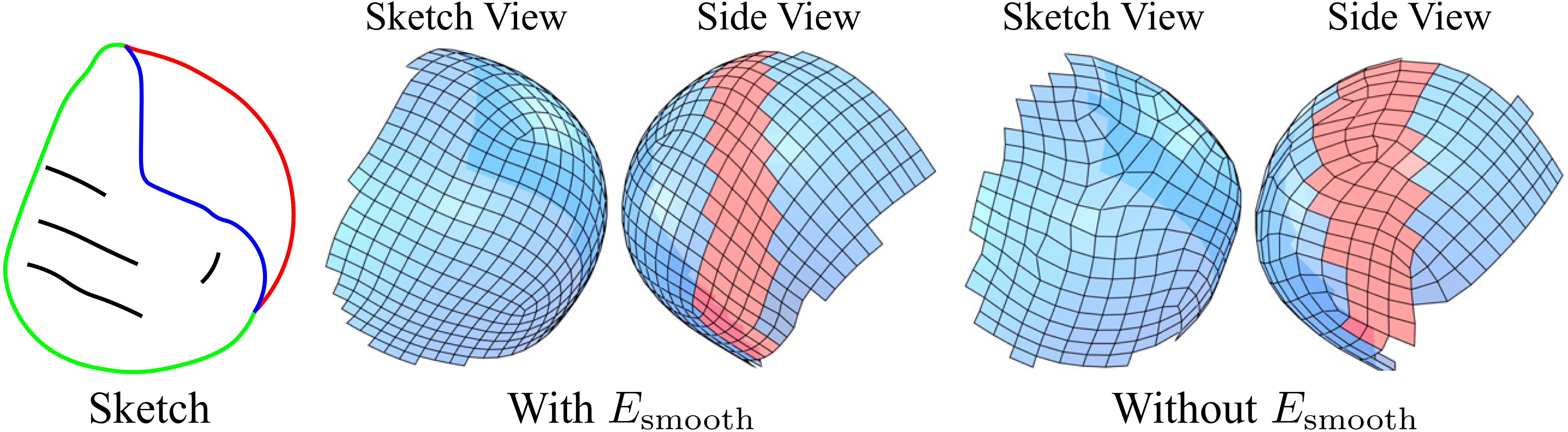}
\caption{PQ mesh results with and without the term $\lcdfcons$ in the CDF-Net loss function. In this example, the lack of $\lcdfcons$ leads to a more irregular layout around the contour (shown in red in the side views).}
\label{fig:field_sketch_field}
\end{figure}

\myparagraph{CDF prediction} 
Our loss function for the CDF-Net includes a term $\lcdfcons$ that enforces consistency of the CDF across the contour line. 
We test an alternative approach that does not include this term in the loss function. 
For each method, we compute a smoothness measure $\cdfsmoenergy{}$ of the CDF on the predicted surface using the Dirichlet energy of its PolyVector complex polynomial coefficients as explained in~\cite{Diamanti2015Integrable}. Table~\ref{tab:consistent_field} shows the average value of $\cdfsmoenergy{}$ using each method on the test dataset. We can see that discarding the term $\lcdfcons$ results in a CDF that is less smooth on average. 
Fig.~\ref{fig:field_sketch_field} further shows an example of PQ meshes from each method. It can be seen that the removal of $\lcdfcons$ leads to a less smooth PQ mesh layout especially around the contour region.
Table~\ref{tab:consistent_field} also shows the average values of the max and mean planarity errors $\maxplanarity{}, \meanplanarity{}$ and the alignment error $\pqedgedist{}$ (see Eq.~\eqref{eq:FLalignment}) for the resulting PQ meshes using each method. It shows that the lack of smoothness for the CDF also affects the quality of the PQ meshes and increases the planarity errors and the deviation from the sketched feature lines.

\subsection{User Studies} \label{subsec:userstudy}
We conducted user studies to evaluate the usability of our system. 
The participants were 12 senior undergraduate or graduate architecture students with experience in 3D 
modeling software. 
They were given a short tutorial for the system and practiced with a few examples. The training  
\begin{wrapfigure}{r}{0.5\columnwidth}
	\centering
	\vspace*{-1ex}
	\includegraphics[width=0.5\columnwidth]{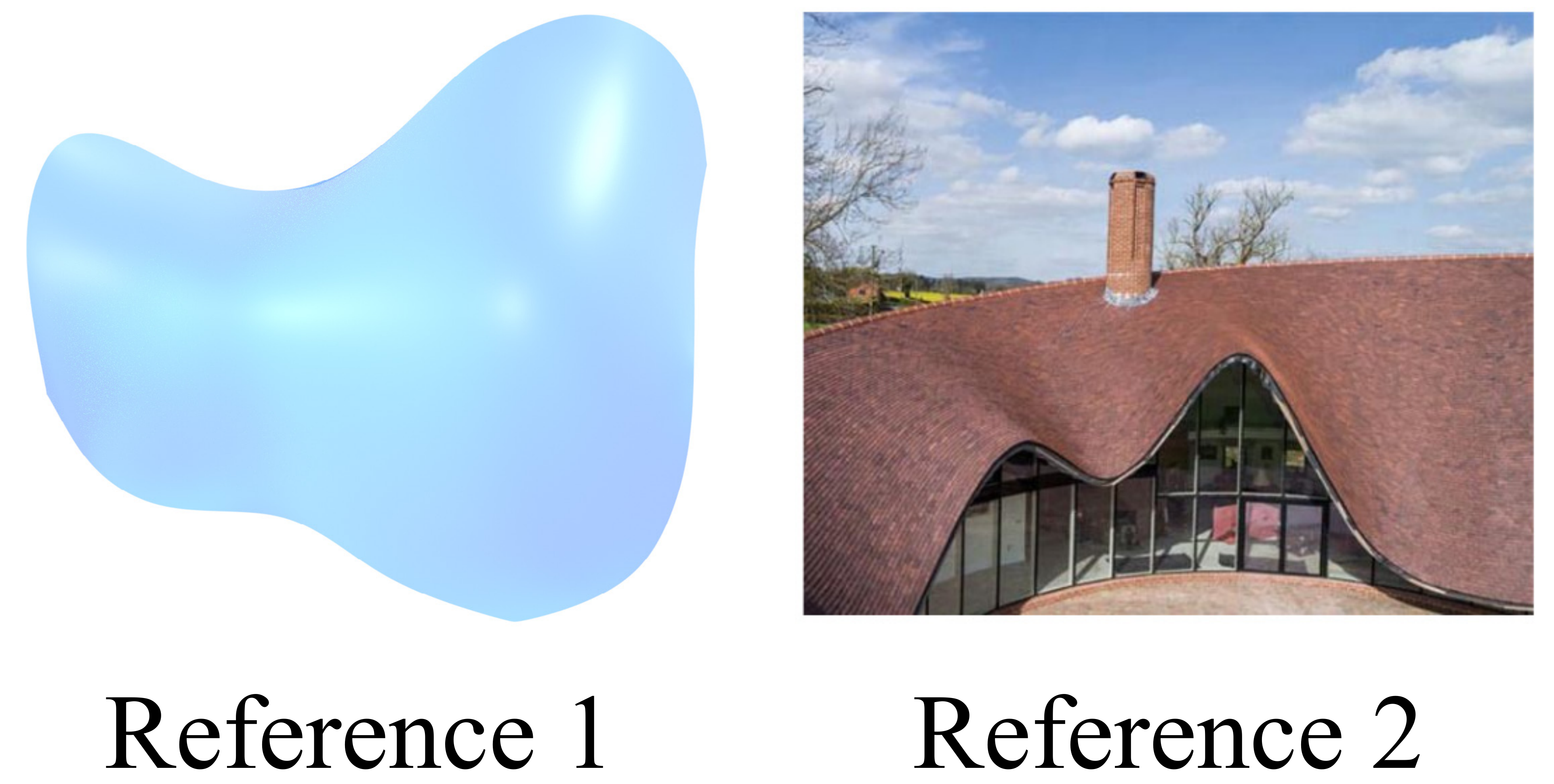}
	\vspace*{-3ex}
\end{wrapfigure}
took about 20 minutes for each participant. Afterwards, the participants were asked to model three PQ meshes. 
The first two need to be modeled according to a given rendered image of an architectural surface and a real architectural photo, respectively (see inset). 
The third one was required to have self-occlusion and the participants could design any shape they liked.
On average, each participant spent 10, 15 and 15 minutes modeling the three shapes, respectively. The modeling time was spent mostly on fine-tuning the shape until the user was satisfied with the result.  Fig.~\ref{fig:UserStudyExamples} shows the designs from two participants for these tasks. We can see that the users were able to create designs similar to the reference shapes. More results are shown in Appendix~\ref{appx:UserStudyMoreExamples}.

\begin{table}[t]
\centering
\caption{Average scores for user study questions. The individual scores can be found in Appendix~\ref{appx:answers}.}
\begin{tabular}{cccccc}
\hline
Question & Q1 & Q2 & Q3 & Q4 & Q5 \\
\hline
Average Score &3.84 & 3.92 & 4 & 4.07 & 4.23 \\
\hline
\end{tabular}
\label{tab:feedback_from_user}
\end{table}

In addition, we designed the following questions collect to feedback from the participants:
\begin{enumerate}
 \item Is the system intuitive compared to other 3D modeling tools you have used, such as Blender, Sketch-up and Rhino3D?
 \item Does the generated shape meet your expectations?
 \item Are the stroke lines effective in helping you create a model?
 \item Does the feature of drawing occluded regions help your modeling?
 \item Do the feature lines help you control the PQ mesh layout?
\end{enumerate}
For each question, each participant was required to choose an answer from five options corresponding to a score of 1 to 5, where a higher score indicates better agreement. The detailed options and the participants' answers can be found in Appendix~\ref{appx:answers}. Table~\ref{tab:feedback_from_user} shows the average score for each question. Overall, the system received positive feedback. 

The participants were also asked to provide free-text comments and suggestions for improvement.
Many participants commented that the capability to model the surface and PQ mesh simultaneously from a sketch is an attractive feature compared with other 3D modeling software.
The main suggestions include: (1)~development of plugins to integrate our system into existing modeling software; (2)~additional constraints on the structural properties (i.e., designing PQ meshes that are also structurally sound); (3)~capability to design multiple surfaces in a single sketch.

 \begin{figure}[t]
 \centering
 \includegraphics[width=\columnwidth]{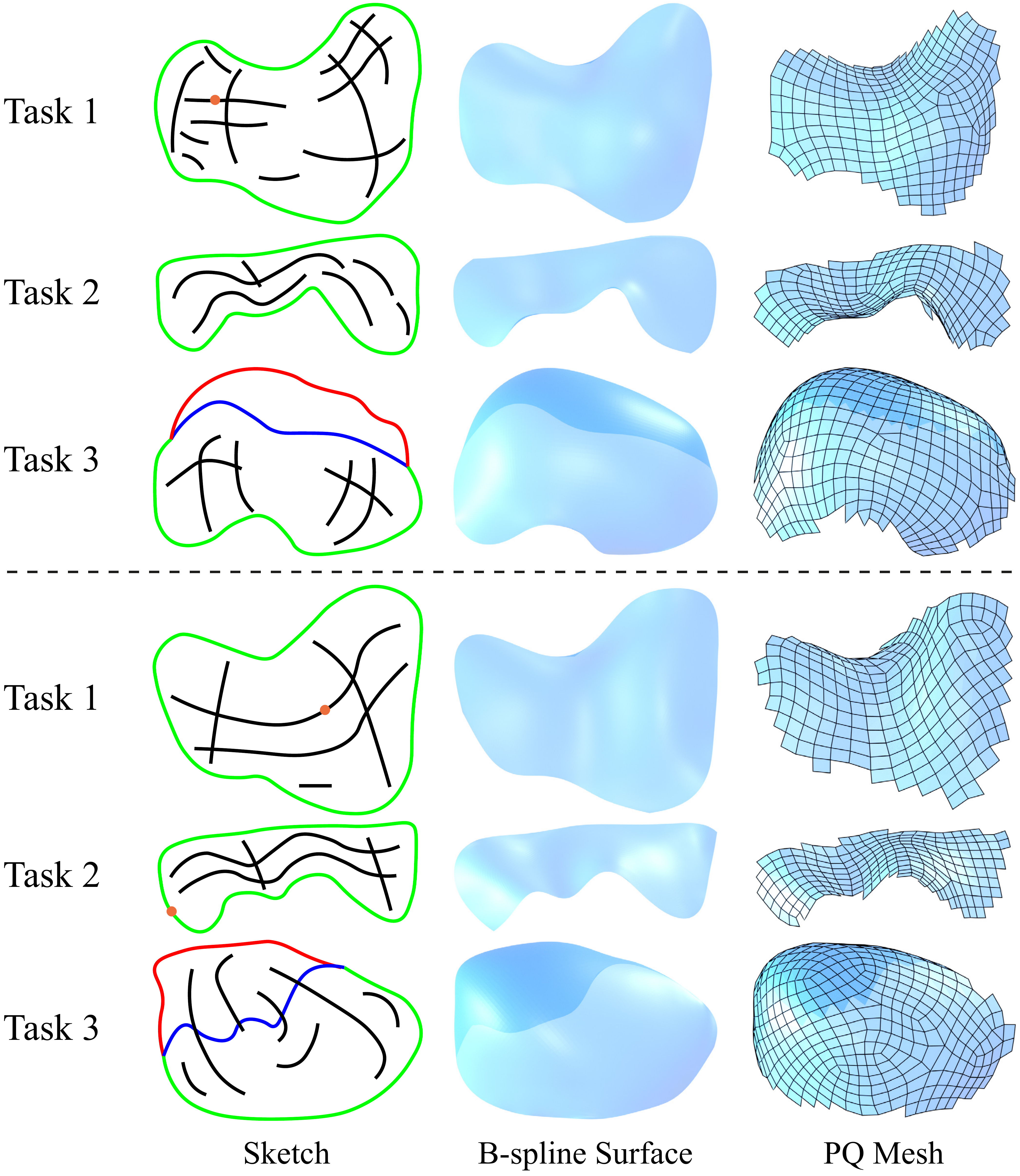}
 \caption{The designs from two participants of the user study for the surface modeling tasks.}
 \label{fig:UserStudyExamples}
 \end{figure}

\section{Conclusion and Discussion}
In this work, we present a sketch-based modeling system for freeform roof-like structures represented as PQ meshes. 
Our system allows the user to draw structural lines for the surface boundary and contours with annotation for occlusion, as well as sparse feature lines that indicate the directions of PQ mesh edges. Utilizing a neural network trained with a synthetic dataset, the system can robustly generate a PQ mesh from a single sketch. The effectiveness and usability of the system are verified with extensive experiments and user studies.
Our learning-based system provides notable benefits in efficiency and can generate the underlying surface and its CDF in real-time; although some existing non-learning-based approaches may be adapted to achieve similar results, this would involve a large-scale non-linear optimization problem for \emph{every} input sketch to find a surface and a CDF consistent with the sketch, which can be substantially slower.
Our work shows that sketch-based modeling can be an effective tool for freeform architectural design where the components need to satisfy additional geometric constraints. 
It enables a novel design paradigm that complements the existing workflow, allowing the highly constrained 3D panel layout to be directly generated using its appearance from a certain view angle, without the need for an existing reference surface.

We envisage several avenues for future research that further improves the system. 
First, the system can be extended to handle more complex surface shapes and configurations such as non-disk topology and multiple layers of occlusions. These can potentially be achieved by adopting topology-agnostic surface representations such as signed distance fields~\cite{Park_2019_CVPR} and adapting the network modules accordingly.
Secondly, our current data generation procedure is based solely on the requirement of a smooth appearance. It can be further developed to incorporate other criteria such as structural properties and design preferences.
Thirdly, due to the PQ meshing algorithm we use, the sketched boundary may not completely align with the boundary edges of the resulting PQ mesh. This can potentially be addressed by introducing optional constraints of boundary edge alignment in a future work.
Fourthly, more fine-grained shape controls such as specification of singular vertices or sharp edges will be helpful additions to the system, which can potentially be achieved by augmenting the training data to include such elements.
Finally, our approaches can be extended to other types of structures and panel layouts in freeform architectural design, such as polyhedral surfaces and developable panels~\cite{pottmann2015architectural}.

\section*{Acknowledgments}
The authors thank Yue Peng for helping to conduct the experiments.
This work was supported by National Natural Science Foundation of China (No. 62122071), the Youth Innovation Promotion Association CAS (No. 2018495), and ``the Fundamental Research Funds for the Central Universities'' (No. WK3470000021).

\bibliographystyle{IEEEtran}
\bibliography{src/sketchCNNBib}
\begin{IEEEbiography}[{\includegraphics[width=1in]{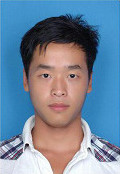}}]{Zhi Deng} is a PhD student at the School of
   	Data Sciences, University of Science and
   	Technology of China. His
   	research interests include Computer Vision and
   	Computer Graphics.
\end{IEEEbiography}

\vspace*{-2em}

\begin{IEEEbiography}[{\includegraphics[width=1in]{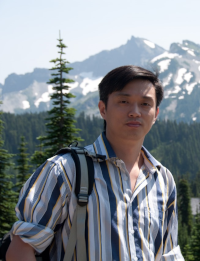}}]{Yang Liu}
is a principal researcher at Microsoft Research Asia. He received his Ph.D. degree from The University of Hong Kong, Master and
Bachelor degrees from University of Science and Technology of China. His recent research focuses on geometric computation and learning-
based geometry processing and generation. He is an associate editor of IEEE Transactions on Visualization and Computer Graphics.
\end{IEEEbiography}

\vspace*{-2em}

\begin{IEEEbiography}[{\includegraphics[width=1in]{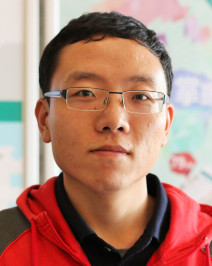}}]{Hao Pan} is a senior researcher at Internet Graphics Group, Microsoft Research Asia.
He received the BEng degree in software engineering from the Shandong University, and the PhD degree in computer science from The
University of Hong Kong. His research interests include computer graphics, 3d computer vision, and geometry modeling and processing.
\end{IEEEbiography}

\vspace*{-2em}

\begin{IEEEbiography}[{\includegraphics[width=1in]{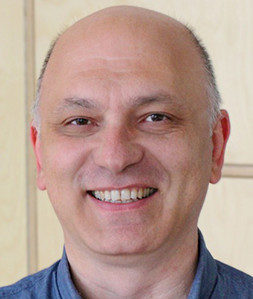}}]{Wassim Jabi}
is a professor at the Welsh School of Architecture, Cardiff University where he leads the digital design area. He earned his B.Arch. from the American University of Beirut and his M.Arch. and Ph.D. from the University of Michigan. His current research is at the intersection of parametric design, the representation of space, building performance simulation, and robotic fabrication in architecture. He is a member of the editorial board of the International Journal of Architectural Computing (IJAC).
\end{IEEEbiography}

\vspace*{-2em}

\begin{IEEEbiography}[{\includegraphics[width=1in]{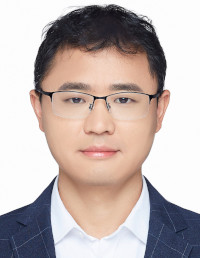}}]{Juyong Zhang}
is a professor in the School of Mathematical Sciences at University of Science and Technology of China. He received the BS degree from the University of Science and Technology of China in 2006, and the PhD degree from Nanyang Technological University, Singapore. His research interests include computer graphics, computer vision, and numerical optimization. He is an associate editor of IEEE Transactions on Multimedia and The Visual Computer.
\end{IEEEbiography}

\vspace*{-2em}    
    
\begin{IEEEbiography}[{\includegraphics[width=1in]{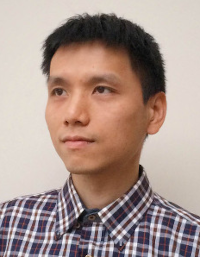}}]{Bailin Deng}
is a senior lecturer in the School of Computer Science and Informatics at Cardiff University. He received the BEng degree in computer software (2005) and the MSc degree in computer science (2008) from Tsinghua University (China), and the PhD degree in technical mathematics (2011) from Vienna University of Technology (Austria). His research interests include geometry processing, numerical optimization, computational design, and digital fabrication. He is a member of the IEEE.
\end{IEEEbiography}

\vfill

\clearpage
\appendices

\section{Proof for Proposition~\ref{prop:Orientation}} 
\label{appx:Proof}
We first prove the following lemmas.

\begin{lemma}
\label{lemmar:1}
If two minimal regions share a boundary line, then one of them must correspond to an occluded part of the surface.
\end{lemma}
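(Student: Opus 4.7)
The plan is to perform a local analysis at the shared boundary line $\ell$ in the canvas, exploiting Assumption~\ref{assump:SurfaceConditions}(a) (disk topology, no self-intersection) and Assumption~\ref{assump:SurfaceConditions}(c) (at most two intersections per view line). Let $A$ and $B$ be the two minimal regions sharing $\ell$, and let $e$ be the 3D boundary edge of the surface whose projection is $\ell$. Because the surface is a disk without self-intersection, it lies on exactly one tangential side of $e$ in 3D; denote that surface patch by $S_+$ and orient the labels so that the projection of $S_+$ covers a canvas neighborhood on the $A$-side of $\ell$. As a preliminary step I would argue that $\ell$ must lie in the interior of the surface's projection: otherwise one side of $\ell$ would be outside the disk-topology projected region of Assumption~\ref{assump:SurfaceConditions}(b), and therefore could not form a bounded minimal region of the sketch.

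With $\ell$ in the interior of the projection, every point of $B$ lies above some surface point along the view direction. Since $S_+$ does not cover the $B$-side of $\ell$ locally, and by Assumption~\ref{assump:SurfaceConditions}(c) each view line meets the surface at most twice, the view line through any point of $\ell$ must meet the surface at a second point $e'$ distinct from $e$. Otherwise, by continuity, view lines through canvas points on the $B$-side arbitrarily close to $\ell$ would miss the surface altogether, contradicting $B$'s lying inside the projection.

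Propagating this pair of intersections to a neighborhood of $\ell$, continuity then implies that view lines through $A$-side points close to $\ell$ meet the surface twice (near $e$ and near $e'$), while view lines through $B$-side points meet it only once (near $e'$). Consequently $A$ is covered both by a visible surface piece and by an occluded one — whichever of $e, e'$ is nearer the camera contributes the visible layer and the other contributes the occluded layer — which is exactly the claim of the lemma.

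The main obstacle I expect is handling the continuity argument rigorously in degenerate configurations: one must preclude that $e$ and $e'$ coincide (which would force a tangential self-contact, violating Assumption~\ref{assump:SurfaceConditions}(a)) and that the two layers on the $A$-side could merge as we approach $\ell$ (any such merging would produce a third nearby intersection, contradicting Assumption~\ref{assump:SurfaceConditions}(c) on slightly perturbed view lines). Once these degeneracies are ruled out, the case distinction between a visible boundary line and an occluded boundary line does not affect the argument — it only swaps which of the two intersections is labeled visible — so the conclusion that one of the two minimal regions carries an occluded piece holds uniformly.
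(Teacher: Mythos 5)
There is a genuine gap: your argument never uses Assumption~\ref{assump:SurfaceConditions}(e), yet that assumption is exactly what makes the lemma true, and the step where you pass from ``the view line through a point of $\ell$ meets the surface at a second point $e'$'' to ``view lines through $A$-side points near $\ell$ meet the surface twice'' is unjustified without it. Nothing in Assumptions (a)--(c) forces $e'$ to be an interior surface point whose neighbourhood projects onto \emph{both} sides of $\ell$. The second intersection could itself lie on another boundary segment of the surface (or on a fold/contour) whose projection coincides with $\ell$ along the whole line, with its adjacent sheet projecting to the $B$-side. Concretely, take a topological disk obtained from a square by cutting a slit: the sheet on one side of the slit stays at height $0$, the sheet on the other side is lifted (tapering to $0$ at the slit tip), and the two sheets join smoothly behind the slit. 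This surface satisfies (a), (b), (c), has essentially no occluded region, and yet the two minimal regions adjacent to the projected slit $\ell$ are both purely visible: view lines on the $A$-side hit only the lower sheet, view lines on the $B$-side hit only the upper sheet, and only on $\ell$ itself are there two intersections. In this configuration your continuity argument breaks and the lemma's conclusion fails; what rules it out is precisely Assumption~\ref{assump:SurfaceConditions}(e), since the two boundary edges (one visible, one hidden exactly behind it) would be two structural lines overlapping along an entire segment rather than at isolated points.

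By contrast, the paper's proof is a short contradiction argument built directly on (e): if neither minimal region corresponded to an occluded part, each would be bounded along $\ell$ by its own boundary segment of the $3$D surface, so two distinct structural lines would overlap along the whole of $\ell$, contradicting Assumption~\ref{assump:SurfaceConditions}(e). Your local covering/continuity analysis via (a)--(c) is a reasonable instinct and parts of it (e.g.\ that $\ell$ lies in the interior of the projection, and that view lines through $\ell$ meet the surface a second time) can be made rigorous, but to close the argument you must still invoke (e) to exclude the coinciding-boundary (or coinciding contour) case above, at which point the paper's one-line contradiction is both simpler and more robust.
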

\begin{proof}
Assume that neither of the minimal regions corresponds to an occluded part. Then each of them corresponds to a separate visible region, and the shared boundary line corresponds to two separate boundary segments from the two regions.  This will contract with Assumption~\ref{assump:SurfaceConditions}(e). Therefore, at least one of the minimal regions must correspond to an occluded part of the surface.
\end{proof}

\begin{lemma}
\label{lemmar:2}
If a minimal region corresponds to an occluded part of the surface, then it must have an opposite orientation as the visible part that corresponds to the same minimal region.
\end{lemma}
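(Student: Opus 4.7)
The plan is to use continuity of a consistent unit normal field $\mathbf{n}$ on the surface (orientable by Assumption~\ref{assump:SurfaceConditions}(a)), together with Assumption~\ref{assump:SurfaceConditions}(c), to show that throughout the minimal region $R$ the normals at the visible and occluded surface points have $\mathbf{v}$-components of opposite sign, where $\mathbf{v}$ is the view direction. Since the sign of $\mathbf{n} \cdot \mathbf{v}$ is exactly what distinguishes whether the front or back side of a patch faces the canvas, this will yield the claim.

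First, I would show that for any interior pixel $p \in R$, the view line through $p$ meets the surface in exactly two points $p_V \in V$ and $p_O \in O$, with $p_V$ strictly closer to the viewer. The ``at most two'' part is Assumption~\ref{assump:SurfaceConditions}(c); the ``at least two'' part follows from the hypothesis that $R$ corresponds to both a visible part $V$ and an occluded part $O$. This labelling is continuous in $p$: by the implicit function theorem the maps $p \mapsto p_V$ and $p \mapsto p_O$ are continuous on $\mathrm{int}(R)$, since the view line must be transverse to the surface at $p_V$ and $p_O$ (otherwise a contour would pass through the interior of $R$, contradicting $R$ being a minimal region bounded by structural lines).

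Consequently, $p \mapsto \mathbf{n}(p_V) \cdot \mathbf{v}$ and $p \mapsto \mathbf{n}(p_O) \cdot \mathbf{v}$ are continuous and nowhere zero on the connected open set $\mathrm{int}(R)$, and hence each has a constant sign. To fix the relative sign, I would perform a local analysis near a contour point on $\partial R$, which exists by Assumption~\ref{assump:SurfaceConditions}(d). As $p$ approaches such a boundary point, both $p_V$ and $p_O$ converge to the same surface point $q$ where the view line is tangent to the surface, i.e., $\mathbf{n}(q) \cdot \mathbf{v} = 0$. Choosing local coordinates so that $\mathbf{v}$ is along one axis and the surface is locally the graph of a smooth function, the two nearby intersections appear as two simple roots emerging symmetrically from a double root at $q$. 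A Taylor expansion then shows that $p_V$ and $p_O$ lie on opposite sides of $q$ along the surface, so $\mathbf{n}(p_V) \cdot \mathbf{v}$ and $\mathbf{n}(p_O) \cdot \mathbf{v}$ take opposite signs just inside $R$. Combined with the sign constancy over $\mathrm{int}(R)$, this yields opposite signs throughout $R$, which is precisely the statement that $V$ and $O$ have opposite orientations.

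The main obstacle I anticipate is the clean execution of this local analysis at the contour tangency: one must rule out degenerate scenarios such as higher-order tangency or additional nearby intersections that could break the symmetric-root picture. Assumption~\ref{assump:SurfaceConditions}(c) is precisely what prevents the latter globally, and a mild nondegeneracy of the contour, implicit in treating it as a smooth curve, handles the former. A minor secondary issue is the regularity needed for the implicit function theorem in Step 2, which requires only $C^{1}$ smoothness of the surface and is automatic in the paper's setting where the reconstructed surface is a smooth B-spline patch.
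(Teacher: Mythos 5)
Your proposal is correct and takes essentially the same route as the paper: Assumption~\ref{assump:SurfaceConditions}(d) supplies a contour line incident to the minimal region, Assumption~\ref{assump:SurfaceConditions}(c) ensures the visible and occluded parts are exactly the two sheets meeting along that fold, and the orientation flip across the fold gives the claim. The paper treats the sign change of $\mathbf{n}\cdot\mathbf{v}$ across the contour as immediate, while you justify it explicitly via sign constancy on the region interior and a local tangency analysis --- a rigorous elaboration of the same argument rather than a different one.
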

\begin{proof}
According to Assumption~\ref{assump:SurfaceConditions}(d), the minimal region must be incident with a contour line. According to Assumption~\ref{assump:SurfaceConditions}(c), the contour line corresponds to a curve on the surface that connects the visible part and the occluded part associated with the minimal region. Therefore, the visible part and the occluded part must have opposite orientations.
\end{proof}

We can now prove Proposition~\ref{prop:Orientation}:
\begin{proof}
    Suppose two minimal regions $S_1$ and $S_2$ share a boundary line. According to Lemma~\ref{lemmar:1}, one of them must correspond to an occluded part of the surface. Without loss of generality, we assume that $S_1$ corresponds to an occluded part $S_{1}^{\text{o}}$. We further denote the visible surface parts for $S_1$ and $S_2$ as $S_{1}^{\text{v}}$ and $S_{2}^{\text{v}}$, respectively. 
    
    We first note that $S_{2}^{\text{v}}$ must be connected with either $S_{1}^{\text{o}}$ or $S_{1}^{\text{v}}$. Otherwise, due to Assumption~\ref{assump:SurfaceConditions}(c), the shared boundary line must also correspond to a boundary segment of $S_{2}^{v}$, which will contract with Assumption~\ref{assump:SurfaceConditions}(e).
    
    If $S_1$ and $S_2$ share a visible boundary line, then $S_{2}^{\text{v}}$ must be connected with $S_{1}^{\text{o}}$. Therefore, $S_{1}^{\text{o}}$ and $S_{2}^{\text{v}}$ have the same orientation. From Lemma~\ref{lemmar:2}, $S_{1}^{\text{v}}$ and $S_{2}^{\text{v}}$ must have opposite orientations, which proves Proposition~\ref{prop:Orientation}(a).
    
    Similarly, if $S_1$ and $S_2$ share an occluded boundary line, then $S_{1}^{\text{v}}$ and $S_{2}^{\text{v}}$ must be connected, and they have the same orientation. This proves Proposition~\ref{prop:Orientation}(b).
\end{proof}

\section{User Study Questions and Answers} 
\label{appx:answers}

Below are the answers that the participants can choose for each question in the user study. The number before each answer is its score.

\begin{enumerate}[label=Q\arabic*:]
 \item Is the system intuitive compared to other 3D modeling tools you have used, such as Blender, Sketch-up, and Rhinoceros?

\begin{itemize}
    \item 1: Not at all.

    \item 2: Below average level.

    \item 3: Average level.

    \item 4: Slightly above average level.

    \item 5: Well above average level.
 \end{itemize}

 \item Does the generated shape meet your expectations?
\begin{itemize}
 \item 1: Not at all.

 \item 2: The shape is overall close to the expected shape but lacks many details.

 \item 3: The shape is overall close to the expected shape but lacks some details.

 \item 4:
 The shape meets most of my expectations.

 \item 5: The shape meets all my expectations.
\end{itemize}

 \item Are the stroke lines effective in helping you create a model?

\begin{itemize}
 \item 1: Not at all.

 \item 2: They help to control local panel layouts but also affect other areas severely.

 \item 3: They help to control local panel layouts well but can sometimes affect other areas.

 \item 4: They are effective in helping me create a model when used in conjunction with depth samples.

 \item 5: They are very effective in helping me create a model.
 \end{itemize}

 \item Does the feature of drawing occluded regions help your modeling?

\begin{itemize}
 \item 1: Not at all.

 \item 2: It can generate occluded regions but with unreasonable shapes.

 \item 3: It can generate occluded regions but the shape is not always reasonable.

 \item 4: It can generate occluded regions that basically meet my expectations.

 \item 5: It can generate occluded regions that always meet my expectations very well.
 \end{itemize}
 
 \item Do the feature lines help you control the PQ mesh layout?

\begin{itemize}
 \item 1: Not at all

 \item 2: They help to control the layout in some minor areas.

 \item 3: They help to control the layout roughly.

 \item 4: They help to control the layout and basically meet my requirements.

 \item 5: They help to control the layout and are more effective than existing modeling software.
\end{itemize} 

\end{enumerate}

Table~\ref{tab:Result of the user study} shows answers from all the participants and the average score for each question. 

\begin{table}[!hbt]
\caption{Answers from all participants for the user study questions.}
\centering
\label{tab:Result of the user study}
\begin{tabular}{c c c c c c}
\hline
Score & Q1 & Q2 & Q3 & Q4 & Q5\\
\hline
I & 4 & 5 & 4 & 5 & 4 \\
II & 3 & 3 & 4 &5 &4 \\
III& 5 & 5 & 5 & 5 & 5 \\
IV& 5 & 3& 4 &3 &4 \\
V & 4 & 4 &4 &4 &4 \\
VI& 3 & 4 & 3 & 3 & 4\\
VII& 4 &4 &4 &4 & 5 \\
VIII & 4 & 4 &4 & 4 & 5\\
IX & 4 & 4 & 4 &5 & 4 \\
X & 4 & 4 & 5 &4 & 4 \\
XI & 3 & 4 & 4 &4 & 4 \\
XII & 3 & 3 & 3 & 4 & 4 \\
\hline
Mean & 3.84 & 3.92 & 4 & 4.07 & 4.23 \\
\hline
\end{tabular}
\end{table}

\begin{figure*}[t]
 \centering
 \includegraphics[width=\linewidth]{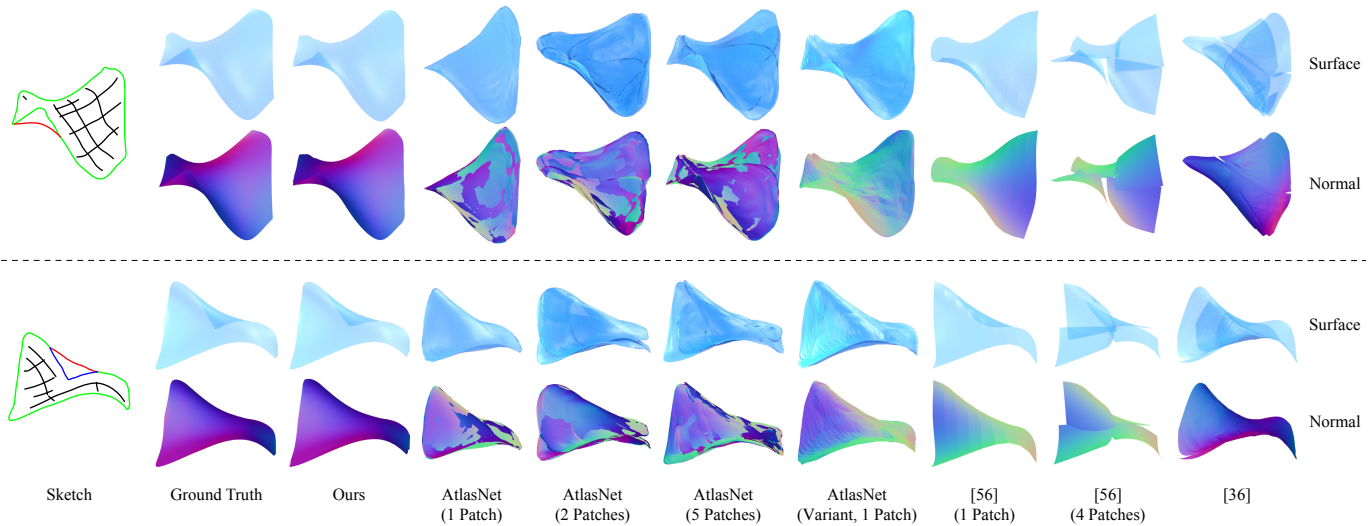}
 \caption{Two sketches from our test dataset, and the surface prediction results using our method, AtlasNet~\cite{atlasnet2018} (with one, two and five patches, and a variant with one patch), the method from~\cite{Deng:2020:3DV} (with one and four patches), and the method from~\cite{smirnov2021patches} (with 10$\times$10 Coons patches).}
 \label{fig:SurfPredictionComparison}
\end{figure*}

\section{Comparison of Surface Prediction Results}
\label{appx:SurfPredictionComparison}
Fig.~\ref{fig:SurfPredictionComparison} shows two sketches from our test dataset, and the predicted surfaces using our method, AtlasNet~\cite{atlasnet2018}, and the methods from~\cite{Deng:2020:3DV} and \cite{smirnov2021patches}.
We also implemented a variant of AtlasNet with a modified loss function: we sample the boundary and the interior of the generated surface and the ground-truth surface respectively, and replace the Chamfer distance term in the original loss function with a combination of two Chamfer distance terms computed using the boundary sample points and the interior sample points respectively; this variant provides better coverage of sample points on the surface boundary, which can improve the accuracy of the boundary shape.
For each result, we also visualize the normals by rendering them as RGB colors over the surface. We can see that our method recovers a surface shape that is close to the ground truth, while the results produced from the other approaches deviate notably from the ground truth. Some results from the other approaches also exhibit incorrect fold-overs. This is potentially because their loss function only uses the Chamfer distance to align the generated shape with the ground truth: a fold-over shape with each ``layer'' close to the ground-truth shape may still achieve a low Chamfer distance.

\begin{figure*}[t]
 \centering
 \includegraphics[width=\linewidth]{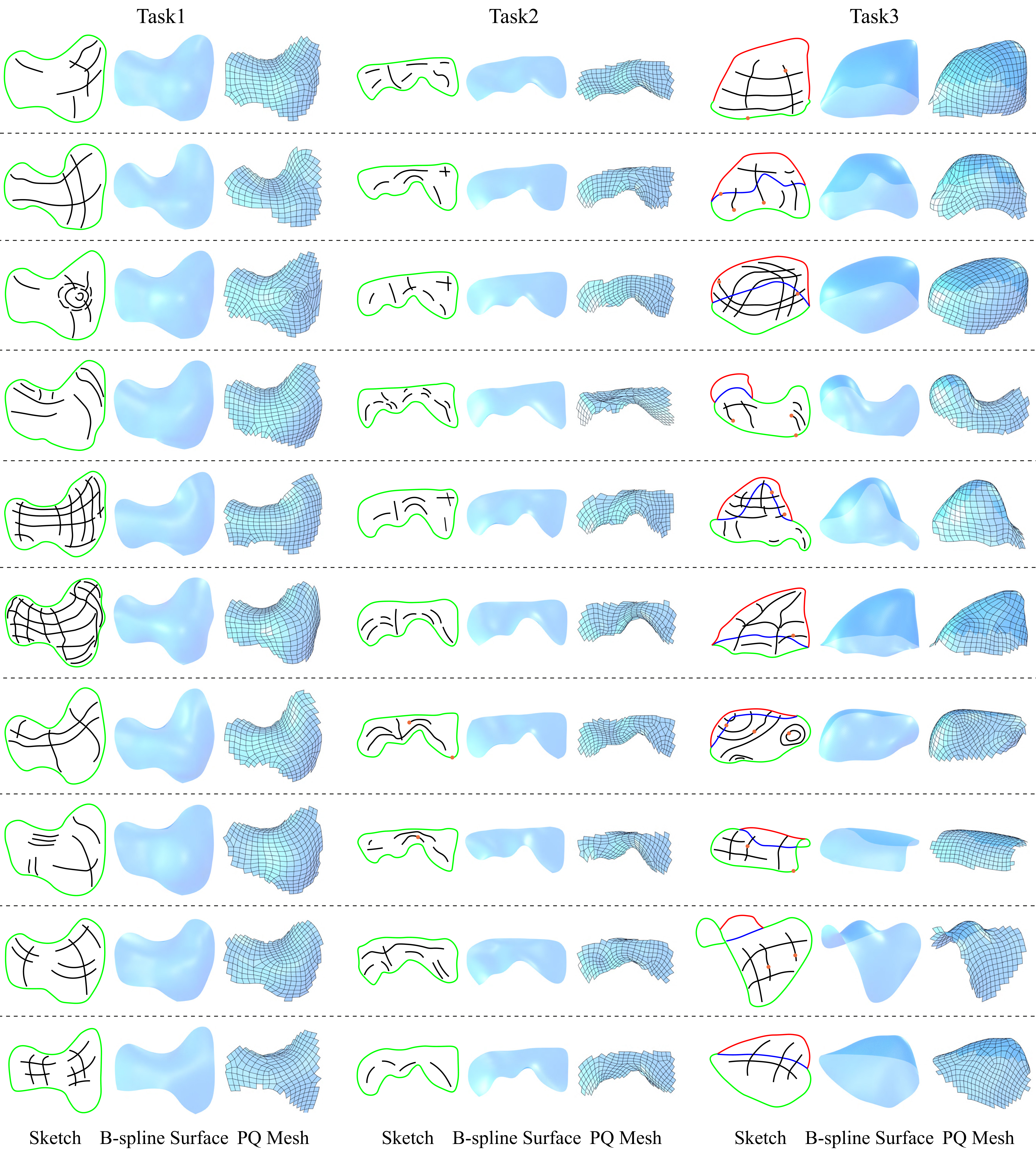}
 \caption{Results of the surface modeling tasks besides the ones already shown in Fig.~\ref{fig:UserStudyExamples}. Each row shows the results from one participant}
 \label{fig:UserStudyRemainingExamples}
\end{figure*}

\section{More Results from Surface Modeling Tasks}
\label{appx:UserStudyMoreExamples}
Fig.~\ref{fig:UserStudyRemainingExamples} shows the participants' designs for the three surface modeling tasks in the user studies, in addition to the ones shown in Fig.~\ref{fig:UserStudyExamples} in Section~\ref{subsec:userstudy}.

\begin{figure*}[t]
\centering
\includegraphics[width=\linewidth]{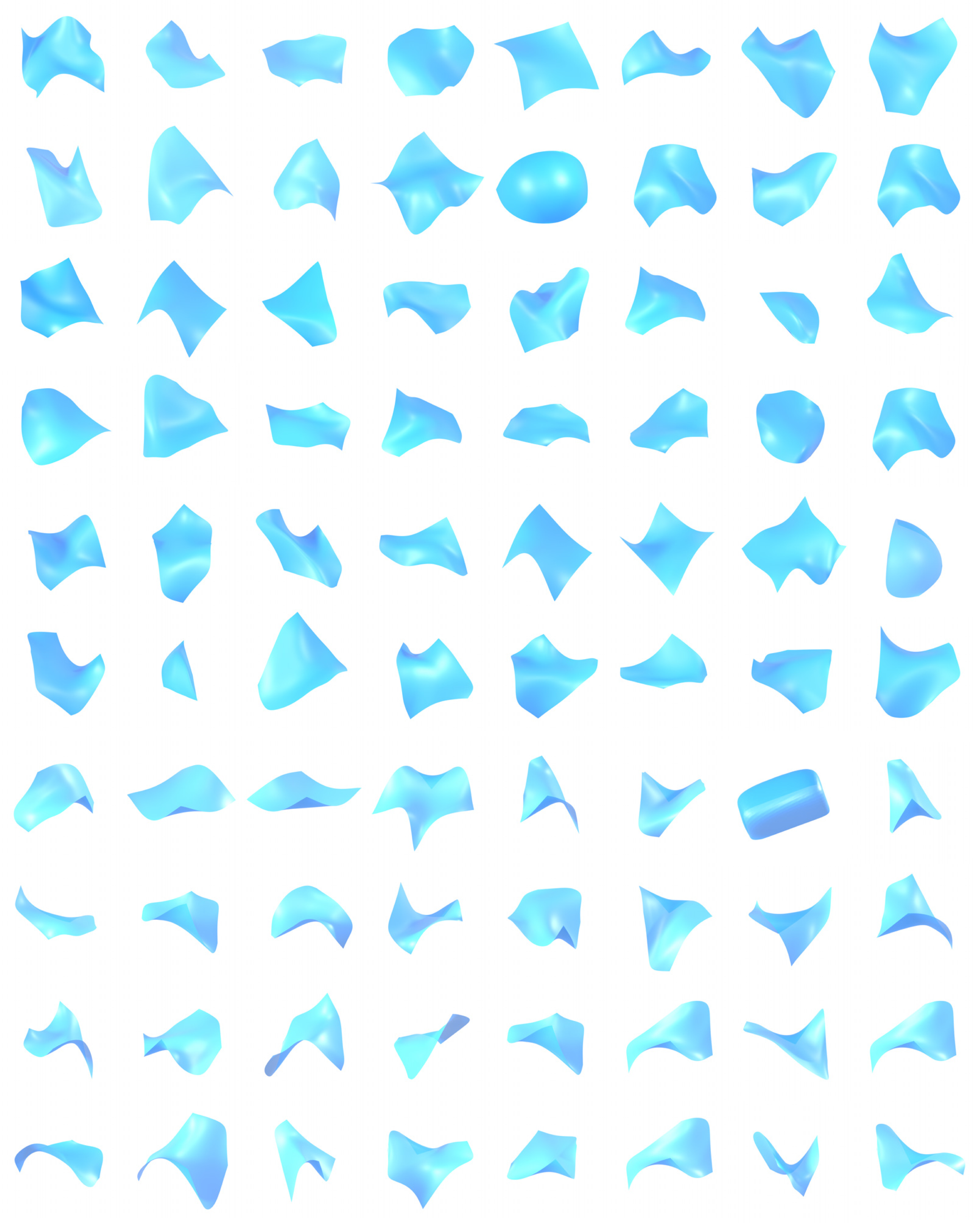}
 \caption{More examples of B-spline surfaces from our dataset.}
\label{fig:More_3d_examples}
\end{figure*}

\section{More Shapes from Our Dataset}
\label{appx:More3dData}
Fig.~\ref{fig:More_3d_examples} shows more examples of B-spline surfaces from our dataset.

\section{Network Training Details for Comparison with~\cite{atlasnet2018,Deng:2020:3DV,smirnov2021patches}} 
\label{exp:training_details}
The networks of \cite{atlasnet2018}, \cite{Deng:2020:3DV} and  \cite{smirnov2021patches} can learn the generation of surface shapes from images. We adopt this setting for all three methods, using their open-source implementation\footnote{\url{https://github.com/ThibaultGROUEIX/AtlasNet}}$^{,}$\footnote{\url{https://github.com/GentleDell/Better-Patch-Stitching}}$^{,}$\footnote{\url{https://github.com/dmsm/LearningPatches}} for training and testing. All parameters are set to their default values given in the open-source implementations.
For all three methods, the training requires sample point positions on the ground-truth surface. Some methods also require sample point normals.
For this purpose, we regularly sample $100 \times 100$ parameter values over the parameter domain of the ground-truth B-spline surface, and use their corresponding point positions and normals on the surface for the training.
Both \cite{atlasnet2018} and  \cite{Deng:2020:3DV} represented the generated shape as a collection of patches. 
We train three networks for~\cite{atlasnet2018} using one, two, and five patches, respectively.
We also train two networks for~\cite{Deng:2020:3DV} using one and four patches, respectively.
In~\cite{smirnov2021patches}, the generated shape is represented using a template consisting of multiple Coons patches, and the default template has a sphere topology. Since we want to generate B-spline surfaces of disk topology, we follow the description in \cite{smirnov2021patches} to construct a disk-topology template consisting of  $10 \times 10$ Coons patches and use it for the training.
\end{document}